\numberwithin{equation}{section}
\newtheorem{lemma}{Lemma}[section]
\newtheorem{theorem}[lemma]{Theorem}
\newtheorem{prop}[lemma]{Proposition}
\newtheorem*{theorem*}{Theorem}
\newtheorem{remark}[lemma]{Remark}
\theoremstyle{definition}
      \newcommand{\R}{{\mathbb R}}
\title[Sampling quantum nonlocal correlations with high probability]{Sampling quantum nonlocal correlations with high probability}
\begin{document}

\addtolength{\parskip}{+1ex}

\keywords{}

\maketitle

\vspace{-0,5em}

C. E. Gonz\'alez-Guill\'en$^{1,4}$, C. H. Jim\'enez$^{2}$, C. Palazuelos$^{3,5}$ and I. Villanueva$^{3,4}$\\

\footnotesize
\vspace{-0,5em}
\noindent$^1$Departamento de Matem\'aticas del \'Area Industrial, E.T.S.I. Industriales, Universidad Polit\'ecnica de Madrid, 28006 Madrid, Spain.\\
$^2$Departamento de Matem\'atica, Pontifícia Universidade Católica do Rio de Janeiro, G\'avea - Rio de Janeiro, 22451900, Brazil.\\
$^3$Departamento de An\'alisis Matem\'atico, Universidad Complutense de Madrid, 28040 Madrid, Spain.\\
$^4$IMI, Universidad Complutense de Madrid, 28040 Madrid, Spain.\\
$^5$Instituto de Ciencias Matem\'aticas (ICMAT), 28049 Madrid. Spain.\\
{\let\thefootnote\relax\footnote{The final publication is available at Springer via http://dx.doi.org/10.1007/s00220-016-2625-8}}
\setcounter{footnote}{0}

\normalsize
\begin{abstract}
It is well known that quantum correlations for bipartite dichotomic measurements are those of the form $\gamma=(\langle u_i,v_j\rangle)_{i,j=1}^n$, where the vectors $u_i$ and $v_j$ are in the unit ball of a real Hilbert space. In this work we study the probability of the nonlocal nature of these correlations as a function of $\alpha=\frac{m}{n}$, where the previous vectors are sampled according to the Haar measure in the unit sphere of $\mathbb R^m$. In particular, we prove the existence of an $\alpha_0>0$ such that if $\alpha\leq \alpha_0$, $\gamma$ is nonlocal with probability tending to $1$ as $n\rightarrow \infty$, while for $\alpha> 2$, $\gamma$ is local with probability tending to $1$ as $n\rightarrow \infty$.
\end{abstract}

\section*{Introduction}
It is well known that local measurements on entangled bipartite quantum states can lead to correlations that cannot be explained by Local Hidden Variable Models (LHVM) \cite{Bell}. This phenomenon, known as \emph{quantum nonlocality}, is one of the most relevant features of quantum mechanics. In fact, though initially discovered in the context of foundations of quantum mechanics, during the last decade quantum nonlocality has become a crucial resource in many applications; some of them are quantum cryptography \cite{ABGMPS,AGM,VaTh1}, communication complexity \cite{BCMW} and random number generators \cite{Pironio,VaTh2}. In this work, we will consider a particularly simple but still relevant context, where two spatially separated observers, Alice and Bob, perform dichotomic (two-outcome) measurements on a bipartite quantum state $\rho$, each on their part of the system. The simplicity of this scenario has made it the natural one  to start developing the previously mentioned applications and also in the experimental verification of the quantum nonlocality phenomenon (see for instance \cite{Aspect,AGR}).

According to the postulates of quantum mechanics, a two-outcome measurement for Alice (resp. Bob) is given by $\{A^+, A^-\}$ (resp. $\{B^+, B^-\}$), where $A^{\pm}$ (resp. $B^{\pm}$) are projectors, or, in general, positive semidefinite operators, acting on a Hilbert space that sum to the identity. We define the observable corresponding to Alice's (Bob's) measurement by $A = A^+ -A^-$ ($B = B^+ -B^-$). The joint correlation of Alice's and Bob's measurement results, denoted by $a$ and $b$ respectively, is $\langle ab \rangle=tr(A\otimes B \rho)$, where $\rho$ is a density matrix\footnote{A density matrix is a positive operator $\rho:H\rightarrow H$ acting on a Hilbert space $H$ with $tr(\rho)=1$.} describing the physical system. Motivated by this, we say that $\gamma=(\gamma_{i,j})_{i,j=1}^n$ is a \emph{quantum correlation matrix}  and denote by $\gamma\in \mathcal Q$, if there exist a density matrix $\rho$ acting on a tensor product of Hilbert spaces $H_1\otimes H_2$ and two families of contractive self-adjoint operators $\{A_i\}_{i=1}^n$, $\{B_i\}_{i=1}^n$  acting on $H_1$ and $H_2$ respectively such that 
\begin{align}\label{QCM1}
\gamma_{i,j}=tr(A_i\otimes B_j \rho)\text{       } \text{    for every    }i,j=1,\dots, n.
\end{align}That is, $\gamma$ is a matrix whose entries are the correlations obtained in an Alice-Bob scenario where each of the observers can choose among $n$ different possible dichotomic measurements. 

We say that $\gamma=(\gamma_{i,j})_{i,j=1}^n$ is a \emph{local correlation matrix} if it belongs to the convex hull
\begin{align}\label{CCM}
\mathcal L=conv\Big\{(\alpha_i\beta_j)_{i,j=1}^n : \alpha_i=\pm 1, \beta_j=\pm 1, \text{   } i,j=1,\dots, n\Big\}.
\end{align}
Local correlation matrices are precisely those whose entries are the correlations obtained in an Alice-Bob scenario when the measurement procedure can be explained by means of a LHVM. It is well known (see \cite{Tsirelson}) that $\mathcal L$ and $\mathcal Q$ are convex sets satisfying $$\mathcal L\varsubsetneq \mathcal Q\varsubsetneq K_G\mathcal L,$$where $1.67696...\leq K_G \leq 1.78221...$ is the so called \emph{Grothendieck's constant}\footnote{The exact value of Grothendieck's constant is still unknown.}. Indeed, the first strict inclusion exactly means that there exist quantum correlations that cannot be explained by means of a LHVM (what we have called quantum nonlocality above) while the second inclusion is a consequence of Grothendieck's inequality (see Theorem \ref{Grothendieck} below) and a result proved by Tsirelson \cite{Tsirelson}, which states that $\gamma=(\gamma_{i,j})_{i,j=1}^n$ is a quantum correlation matrix if and only if there exist a real Hilbert space $H$ and unit vectors $u_1,\dots, u_n, v_1,\dots, v_n$ in $H$ such that
\begin{align}\label{QCM2}
\gamma_{i,j}=\langle u_i,v_j\rangle \text{       } \text{    for every    }i,j=1,\dots, n.
\end{align}

As we have just mentioned, we know of the existence of quantum correlations that are nonlocal. A natural question appears now: how common is nonlocality among quantum correlations? That is, if we pick ``randomly'' a quantum correlation, what is the probability that it is nonlocal? To study this problem, we first need to choose a probability distribution on the set of quantum correlations, in other words, a way of sampling these matrices. We see at least two natural candidates for this. At first sight, it would seem from expression (\ref{QCM1}) that a natural procedure would be sampling on the set of states $\rho$ and on the set of families of self-adjoint and contractive operators $A_1,\dots, A_n,B_1,\dots ,B_n$. The problem with this approach is twofold. First, we do not know a natural probability measure on the set of self-adjoint contractive operators. Second, it seems that we would need to allow for Hilbert spaces of very high dimension\footnote{It is known (\cite{Tsirelson}) that every quantum correlation $\gamma=(\gamma_{i,j})_{i,j=1}^n$ can be written as in (\ref{QCM1}) by using a Hilbert space of dimension exponential in $n$ and, furthermore, such a dimension is required in order to describe the extreme points of $\mathcal Q$.}. 

So, we look for the second candidate: looking at the equivalent reformulation (\ref{QCM2}) of a quantum correlation, we do have a natural sampling procedure: we can sample the  vectors $u_1,\dots, u_n, v_1,\dots, v_n$ independently uniformly distributed on the unit sphere of $\mathbb R^m$. It is well known that this is exactly the same as sampling independent normalized $m$-dimensional gaussian vectors. 

Our results will depend on the relation between the dimension $m$ and the number of measurements $n$.  As we will show later, it is very easy to see that if one fixes any finite $m > 1$, the probability that a quantum correlation matrix $\gamma$  sampled according to the previous procedure is nonlocal tends to one as $n$ tends to infinity. It is also simple to see that if $n$ is fixed and $m$ tends to infinity, then the probability that $\gamma$  is not local converges to $0$.

Our main result says that in the ``constant ratio regime'', where the ratio $\alpha=\frac{m}{n}$ remains constant as $n$ grows, both extreme cases are possible: $\gamma$ will be almost surely local for $\alpha$ big enough, whereas $\gamma$ will be almost surely non local for $\alpha$ small enough. 

Specifically, the main result of our work can be condensed in:

\begin{theorem}\label{Main theorem}
Let $n$ and $m$ be two natural numbers and $\alpha=\frac{m}{n}$. Let us consider $2n$ vectors $u_1,\dots, u_n, v_1,\dots, v_n$ sampled independently according to the Haar measure in the unit sphere of $\mathbb R^m$ and let us denote by $\gamma=(\langle u_i,v_j\rangle)_{i,j=1}^n$ the corresponding quantum correlation matrix. 
\begin{itemize}
\item[a)] If $\alpha\leq \alpha_0\approx 0.004$ then $\gamma$ is nonlocal with probability tending to one as $n$ tends to infinity.
\item[b)] If $\alpha> 2$, then $\gamma$ is local with probability tending to one as $n$ tends to infinity.
\end{itemize}
\end{theorem}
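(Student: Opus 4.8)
The two parts of the theorem are essentially independent, so I would attack them separately.

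For part (b), the goal is an upper bound on the ratio $\alpha$ guaranteeing locality with high probability. The natural strategy is to show that when $m \geq 2n$ the sampled Gram-type matrix $\gamma = (\langle u_i, v_j\rangle)$ is, with probability tending to one, membership-testable against $\mathcal L$ via Grothendieck's inequality in the following quantitative form: a correlation matrix is local as soon as its ``operator-type'' norm (the norm dual to the one appearing in Grothendieck's inequality, i.e.\ $\max \sum_{i,j}\gamma_{i,j}\alpha_i\beta_j$ over signs, or equivalently the relevant $\ell_\infty \to \ell_1$ norm) is at most $1/K_G$ times something — but more cleanly, one shows directly that the random matrix lies in $\mathcal L$ by exhibiting it as a convex combination, or by bounding $\|\gamma\|_{\text{cut}}$ or the injective/$\gamma_2^*$ norm. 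The key probabilistic input is concentration: each entry $\langle u_i, v_j\rangle$ behaves like a Gaussian of variance $1/m$, so the whole matrix is small in the relevant norm — one expects $\|\gamma\|$ (in the norm certifying membership in $\mathcal L$, appropriately normalized) to be of order $\sqrt{n/m} = 1/\sqrt{\alpha}$ up to logarithmic or constant factors, which is below the locality threshold precisely when $\alpha$ exceeds an absolute constant; chasing the constants through Grothendieck's inequality and a net/concentration argument over the $2^{2n}$ sign patterns (union bound against Gaussian tails, the exponential decay beating $4^n$) should yield the clean cutoff at $\alpha > 2$. I would expect this direction to be the more routine one.

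For part (a), the goal is to show nonlocality with high probability when $\alpha$ is below a small absolute constant. Here one must produce, with high probability, a Bell functional (a sign matrix, or more generally a matrix in the dual ball of $\mathcal L$) that the random $\gamma$ violates. The natural candidate is to test $\gamma$ against \emph{itself} or against $\mathrm{sign}(\gamma)$: compute $\sum_{i,j} \gamma_{i,j}\,\mathrm{sign}(\gamma_{i,j}) = \sum_{i,j}|\gamma_{i,j}|$, which concentrates around $n^2 \cdot \mathbb{E}|\langle u_i,v_j\rangle| \approx n^2 \sqrt{2/(\pi m)} = cn^{3/2}/\sqrt{m} = c n \sqrt{\alpha^{-1}}\cdot\sqrt{\alpha}$... more precisely $\asymp n^{3/2}/\sqrt m$. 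One then needs an \emph{upper} bound on $\max_{\beta} \max_i \big|\sum_j \mathrm{sign}(\gamma_{i,j})\beta_j\big|$-type quantities — i.e.\ on the local value of the functional $\mathrm{sign}(\gamma)$ — which is the combinatorial quantity $\max_{\alpha_i,\beta_j = \pm 1}\sum_{i,j}\mathrm{sign}(\gamma_{i,j})\alpha_i\beta_j$; since $\mathrm{sign}(\gamma)$ is (conditionally on the signs being ``generic'') close to a random $\pm 1$ matrix, its local/injective norm is of order $n^{3/2}$ by standard random-matrix estimates. Comparing $c\, n^{3/2}/\sqrt{m}$ (quantum value along this functional) with $C n^{3/2}$ (local value) gives a violation precisely when $1/\sqrt{m} > C/c$, i.e.\ when $m$ is bounded — which is the wrong regime. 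So the honest argument must instead fix $\alpha = m/n$ small: then the quantum value of the normalized functional is $\asymp n^2/\sqrt{m} = n^{3/2}/\sqrt{\alpha}\cdot n^{-1/2}$... The correct bookkeeping: the relevant ratio is (quantum value)$/$(local value) $\asymp (n^2/\sqrt m)/n^{3/2} = \sqrt{n/m} = 1/\sqrt\alpha$, which exceeds $1$ exactly when $\alpha < 1$, and exceeds $K_G$-safely when $\alpha$ is a small enough absolute constant — yielding $\alpha_0$. The technical work is therefore: (i) a lower-tail concentration bound for $\sum_{i,j}|\langle u_i,v_j\rangle|$ around $\asymp n^2/\sqrt m$, which is a sum of weakly dependent bounded variables and can be handled by a martingale/bounded-differences inequality in the $2m$ independent vectors; and (ii) a high-probability \emph{upper} bound on the local value $\|\mathrm{sign}(\gamma)\|_{\mathcal L^\circ}$, the harder ingredient, since $\mathrm{sign}(\gamma)$ is not exactly a uniform random sign matrix — one needs to control its correlation structure, likely via a union bound over the $4^n$ sign patterns combined with a conditional Gaussian tail estimate for $\sum_{i,j}\mathrm{sign}(\gamma_{i,j})\alpha_i\beta_j$ given the vectors, exploiting that $\gamma_{i,j}$ is close to a standard Gaussian $g_{i,j}/\sqrt m$ where the $g_{i,j}$ are nearly independent for $n \ll m$-free regimes.

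The main obstacle is step (ii) of part (a): obtaining a sufficiently sharp high-probability upper bound on the local (Bell) value of the test functional $\mathrm{sign}(\gamma)$. A clean random sign matrix has injective norm $\Theta(n^{3/2})$, but here the entries are correlated through the shared vectors $u_i, v_j$ living in only $m = \alpha n$ dimensions, so one must either (a) decouple by conditioning on one family of vectors and using Gaussian concentration for the linear functional $\beta \mapsto \sum_{i,j}\langle u_i, v_j\rangle \alpha_i \beta_j$ in the remaining randomness, then union-bound over all $2^n$ choices of $\beta$ and $2^n$ of $\alpha$ and all positions of the sign — or (b) compare $\gamma$ directly to a genuine Gaussian matrix via a coupling, controlling the total-variation or moment discrepancy coming from the normalization and the low dimension. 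Getting the constants to close — so that the lower bound $c\,n^2/\sqrt m$ on the quantum value strictly exceeds $K_G$ times the local value with probability $\to 1$ — is what pins down the explicit tiny value $\alpha_0 \approx 0.004$, and is where the bulk of the careful estimation will go.
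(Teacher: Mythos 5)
Your sketch has the right orders of magnitude, but both parts contain gaps or misattributions that the paper's actual argument avoids with specific tools you have not identified.

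For part (b), the paper does \emph{not} route through Grothendieck's inequality nor through a union bound over sign patterns, and neither of those would work. Bounding $\max_{s,t=\pm1}\bigl|\sum_{i,j}\gamma_{i,j}s_it_j\bigr|$ controls the \emph{injective} (Bell value) norm $\|\gamma\|_{\ell_1\otimes_\epsilon\ell_1}$, whereas locality is certified by the much larger \emph{projective} norm $\|\gamma\|_{\ell_\infty\otimes_\pi\ell_\infty}\leq 1$; the gap between the two is of order $\sqrt n$ in general, so a concentration bound on the former tells you nothing about the latter. The paper's key lemma is a Khintchine-inequality argument: $\|\gamma\|_{\ell_\infty\otimes_\pi\ell_\infty}\leq\sqrt 2\,\|\gamma\|_{\ell_\infty(\ell_2)}$, i.e.\ $\gamma$ is local as soon as every row has Euclidean norm at most $1/\sqrt 2$. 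Since each row of $\gamma$ concentrates around norm $\sqrt{n/m}=1/\sqrt\alpha$, this gives the threshold at $\alpha>2$ immediately. The constant $2$ is $(\sqrt2)^2$ from Khintchine's constant, not from $K_G$; your attribution of it to Grothendieck is a genuine misidentification of the mechanism.

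For part (a), testing $\gamma$ against $\operatorname{sign}(\gamma)$ is a genuinely different idea from the paper's, and you have correctly put your finger on the gap: you need a high-probability upper bound $\omega(\operatorname{sign}(\gamma))=O(n^{3/2})$ with a controlled constant, and $\operatorname{sign}(\gamma)$ is far from an i.i.d.\ sign matrix because the $2n$ vectors live in only $m=\alpha n$ dimensions. As $\alpha\to 0$ this dependency gets worse, not better (in the extreme $m=1$, $\operatorname{sign}(\gamma)$ is rank one and $\omega(\operatorname{sign}(\gamma))=n^2$), so it is not at all clear the bound even holds in the regime you need, and you give no argument for it. The paper sidesteps this entirely by \emph{constructing} the Bell functional $A$ to be an honest Gaussian matrix: take a Gaussian $A$, write its SVD $A=U\Sigma V^*$, observe (via bi-orthogonal invariance) that $U,V$ are independent Haar orthogonal matrices, use Marcenko--Pastur to lower bound $\sum_{k\leq m}\lambda_k$, and use the Ambainis et al.\ estimate $\omega(A)\leq 1.6651\ldots n^{3/2}$ which applies because $A$ is genuinely Gaussian. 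The remaining (and central) technical tool, which your sketch does not have, is the decoupling theorem of Gonz\'alez-Guill\'en--Palazuelos--Villanueva bounding $\sup_i\|F_i^m(G-\sqrt n U)\|$ for $U$ the Gram--Schmidt orthonormalization of a Gaussian $G$; this, combined with Grothendieck's inequality applied to the error terms, is what transfers the large value of $\langle A,U\Sigma V^*\rangle$ to a large value of $\langle A,\gamma\rangle$ and produces the explicit $\alpha_0\approx 0.004$.
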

This result shows clearly the need of studying the problem as a function of the parameter $\alpha=\frac{m}{n}$. One possible way to think of this problem is the following: say that we want to sample our vectors on a space of large dimension $m$. In that case, how many vectors $u_1,\dots, u_n, v_1,\dots, v_n$ will we need to sample in order to have nonlocality with high probability? Our results show that $n=\frac{m}{2}$ will be too few vectors, whereas $n=\frac{m}{\alpha_0}$ will be enough. 

There is a considerable gap between $\alpha_0$ and $2$. Our techniques could be refined to slightly increase the bound $\alpha_0$, but they will never reach the relevant case $\alpha_0=1$. From the other side, our proof of part b) suggests that a more clever argument could lead to replace $2$ by $K_G$, but again our present approach does not seem to allow for further improvement. Along these lines, it is plausible that a relation between $\alpha$ and $K_G$ describes interesting behaviors of our correlation matrices. 

It would be interesting to understand the problem for the values $\alpha\in (\alpha_0, 2)$ both by reducing this gap and by studying the existence, or not, of a sharp threshold behaviour of the probability of nonlocality.

Interestingly, we will see below that if one samples normalized vectors whose entries are independent Bernoulli variables, the probability of obtaining a nonlocal correlation matrix is zero, since all of them will be local. This means that, in contrast to many other contexts in random matrix theory, considering gaussian and Bernouilli random variables in our problem leads to completely different conclusions.

In order to prove Theorem \ref{Main theorem} we will use a result previously proved in \cite{GPV} on random matrix theory. Similar techniques were previously used in \cite{Ambainis} to study the dual problem. This problem consists in how likely it is for a random XOR game to have a maximum quantum value strictly bigger than a maximum classical value. In  \cite{Ambainis}, the authors studied the values $\omega^*(A)$ and $\omega(A)$ for random matrices\footnote{Although the authors focused on sign matrices, the same proof works in the case of more general random matrices.} $A=(a_{i,j})_{i,j=1}^n$, where $$\omega^*(A)=\sup\Big\{\sum_{i,j=1}^n a_{i,j}\gamma_{i,j}: \gamma\in \mathcal Q\Big\}\text{     }\text{  and    }\text{     }\omega(A)=\sup\Big\{\sum_{i,j=1}^n a_{i,j}\gamma_{i,j}: \gamma\in \mathcal L\Big\}.$$
They concluded that, for any given $\epsilon>0$, $\omega^*(A)\geq (2-\epsilon) n^{\frac{3}{2}}$ and $\omega(A)\leq 1.6651\dots n^{\frac{3}{2}}$ with probability $1-o(1)$ as $n\rightarrow \infty$ in both cases. This result is the starting point for the proof of our Theorem \ref{Main theorem}. Note that stating ${\omega^*(A)}/{\omega(A)}>1$ for some $A$'s is a reformulation (in a quantitive way) of the fact that $\mathcal L\varsubsetneq \mathcal Q$. The elements $A$'s are usually called \emph{correlation Bell inequalities} (or \emph{XOR-games} in the context of computer science) and the fact that ${\omega^*(A)}/{\omega(A)}>1$ is usually referred to as a \emph{Bell inequality violation}.

The paper is organized as follows. In the first section we briefly introduce some basic results, which will be used throughout the whole paper. The proof of Theorem \ref{Main theorem} is presented in Section \ref{Section: lower bound} and Section \ref{Section: upper bound}. The proof of part a) of the theorem, based on some results on random matrix theory, is given in Section \ref{Section: lower bound}, while Section \ref{Section: upper bound} deals with the proof of part b).
\section{Preliminary results}\label{background}
For completeness and to simplify the reading of the paper, we state in this section the known, or essentially known, previous results which we use throughout the paper.

We will refer to a gaussian vector (matrix) as a random vector whose coordinates are independent standard gaussian random variables in $\mathbb R$, i.e., normal random variables with zero mean and unit variance. 

The following Chernoff-like bound follows from standard bounds of the tails of a standard normal random variable.
\begin{prop}\label{Chernoff}
Let $\{X_i\}_{i=1}^n$ be i.i.d real standard gaussian random variables and $a=(a_1,...,a_n)\in \R^n$. Then, for any $t>1$
$$\Pr\Big(\big|\sum_{i=1}^na_iX_i\big|\geq t\Big) \leq 2e^{-\frac{t^2}{2\|a\|^2}}.$$
\end{prop}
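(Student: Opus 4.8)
The plan is to reduce the multidimensional statement to the classical one-dimensional gaussian tail estimate. First I would observe that, since the $X_i$ are independent standard gaussians, the linear combination $S:=\sum_{i=1}^n a_i X_i$ is again a centered gaussian random variable, with variance $\sum_{i=1}^n a_i^2=\|a\|^2$. Discarding the trivial case $a=0$ (where the left-hand side is $0$), we may write $S=\|a\|\,Z$ with $Z$ a standard normal variable, so that
$$\Pr\Big(\big|\sum_{i=1}^n a_i X_i\big|\geq t\Big)=\Pr\big(|Z|\geq t/\|a\|\big).$$

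Next I would establish the elementary bound $\Pr(Z\geq s)\leq e^{-s^2/2}$ for every $s\geq 0$ by the standard Chernoff (exponential Markov) argument: for any $\lambda>0$,
$$\Pr(Z\geq s)=\Pr\big(e^{\lambda Z}\geq e^{\lambda s}\big)\leq e^{-\lambda s}\,\mathbb E\big[e^{\lambda Z}\big]=e^{-\lambda s+\lambda^2/2},$$
using the explicit moment generating function $\mathbb E[e^{\lambda Z}]=e^{\lambda^2/2}$ of the normal distribution; choosing $\lambda=s$ gives $\Pr(Z\geq s)\leq e^{-s^2/2}$. By the symmetry of the law of $Z$ this yields $\Pr(|Z|\geq s)\leq 2e^{-s^2/2}$.

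Combining the two displays with $s=t/\|a\|$ gives the claimed inequality; in fact the bound is valid for all $t>0$, the hypothesis $t>1$ being harmless for later use. There is no real obstacle here: the only points requiring a word of care are the degenerate case $\|a\|=0$ and the (standard) facts that a linear combination of independent gaussians is gaussian with variances adding, and that the normal moment generating function is $e^{\lambda^2/2}$.
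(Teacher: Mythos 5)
Your proof is correct and is precisely the argument the paper alludes to when it says the bound ``follows from standard bounds of the tails of a standard normal random variable'': reduce $\sum_i a_i X_i$ to $\|a\|Z$ with $Z$ standard normal, prove the one-sided tail bound $\Pr(Z\ge s)\le e^{-s^2/2}$ by the exponential Chernoff/Markov step, and double by symmetry. The paper gives no further details, so there is nothing to compare beyond noting that you have filled in exactly the standard derivation (and correctly observe the hypothesis $t>1$ is not actually needed).
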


We will also use the well known bounds of the norm of a gaussian vector (see for instance \cite[Corollary 2.3]{Barvinok}).

\begin{prop}\label{concentracion} 
Let $(g_{i,j})_{i,j=1}^{n,m}$ be a gaussian matrix and denote $g_i=(g_{i,j})_{j=1}^m$ for every $i=1,\dots , n$. Then, for every $0<\epsilon<1$ and $i=1,\dots,n$, $$\Pr\Big(\|g_i\|\geq \frac{\sqrt{m}}{\sqrt{1-\epsilon}}\Big)\leq e^{-\frac{\epsilon^2 m}{4}} $$
and
$$\Pr\Big( \|g_i\| \leq \sqrt{m}\sqrt{1-\epsilon}\Big)\leq e^{\frac{-\epsilon^2 m}{4}}.$$
As a consequence we have
\begin{align}\label{normalapprox}
\Pr\Big(\sup_{i=1,\dots, n}\Big\|\frac{g_i}{\|g_i\|}-\frac{g_i}{\sqrt{m}}\Big\|>\epsilon\Big)=
\Pr\Big(\sup_{i=1,\dots, n} \Big|1-\frac{\|g_i\|}{\sqrt{m}}\Big|>\epsilon\Big)\leq 2ne^{-\frac{\epsilon^2m}{4}}.
\end{align}
\end{prop}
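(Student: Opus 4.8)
The first two inequalities are the classical Gaussian/chi-square concentration bounds: since $\|g_i\|^2=\sum_{j=1}^m g_{i,j}^2$ has the $\chi^2$ distribution with $m$ degrees of freedom, the simplest route is to invoke \cite[Corollary 2.3]{Barvinok} directly. If one prefers a self-contained argument, the plan is the standard Chernoff/Laplace-transform method: using independence of the $g_{i,j}$ one has $\mathbb{E}\,e^{\lambda\|g_i\|^2}=(1-2\lambda)^{-m/2}$ for $0<\lambda<\tfrac12$ and $\mathbb{E}\,e^{-\lambda\|g_i\|^2}=(1+2\lambda)^{-m/2}$ for $\lambda>0$. Applying Markov's inequality to $e^{\pm\lambda\|g_i\|^2}$ and optimizing over $\lambda$ gives, for every $s>1$ (resp.\ $s<1$), the bound $\Pr(\|g_i\|^2\ge ms)\le e^{\frac m2(\log s-s+1)}$ (resp.\ $\Pr(\|g_i\|^2\le ms)\le e^{\frac m2(\log s-s+1)}$). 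I would then specialize $s=\tfrac1{1-\epsilon}$ in the first and $s=1-\epsilon$ in the second and bound $\log s-s+1\le-\tfrac{\epsilon^2}{2}$ for $0<\epsilon<1$ using the Taylor expansion of $\log(1\pm\epsilon)$, which produces exactly the two stated tail estimates with exponent $-\epsilon^2m/4$.

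For the ``consequence'', I would first record the elementary identity $\big\|\frac{g_i}{\|g_i\|}-\frac{g_i}{\sqrt m}\big\|=\big\|\frac{g_i}{\|g_i\|}\big\|\,\big|1-\frac{\|g_i\|}{\sqrt m}\big|=\big|1-\frac{\|g_i\|}{\sqrt m}\big|$, which yields the asserted equality between the two events. Then I would split $\{\,|1-\|g_i\|/\sqrt m|>\epsilon\,\}$ into the two one-sided tails $\{\|g_i\|>\sqrt m(1+\epsilon)\}$ and $\{\|g_i\|<\sqrt m(1-\epsilon)\}$. The lower tail is immediate from the second inequality, since $1-\epsilon\le\sqrt{1-\epsilon}$ on $(0,1)$, so this event is contained in $\{\|g_i\|\le\sqrt m\sqrt{1-\epsilon}\}$ and has probability at most $e^{-\epsilon^2m/4}$. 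For the upper tail I would apply the general Chernoff bound above with $s=(1+\epsilon)^2$ and use $\log(1+\epsilon)\le\epsilon-\tfrac{\epsilon^2}{2}+\tfrac{\epsilon^3}{3}$ to obtain $\Pr(\|g_i\|>\sqrt m(1+\epsilon))\le e^{-\epsilon^2m/4}$ as well. A union bound over the $2n$ tail events (two for each $i=1,\dots,n$) then produces the factor $2n$ and finishes the argument.

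This is a routine computation with no genuine obstacle; the only points requiring a little care are keeping the elementary $\log$-estimates tight enough that the constant in the exponent comes out as exactly $1/4$, and handling the upper tail in the consequence — there the naive containment in the event of the first displayed inequality fails for $\epsilon$ close to $1$ (since $1+\epsilon$ need not dominate $1/\sqrt{1-\epsilon}$), so one must go back to the Chernoff bound directly, or alternatively restrict to the range of $\epsilon$ actually used later in the paper.
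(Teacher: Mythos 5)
Your proof is correct. The paper does not actually supply a proof of this proposition: the two one-sided tail bounds are cited from Barvinok's lecture notes \cite[Corollary 2.3]{Barvinok}, and \eqref{normalapprox} is presented as an immediate consequence. Your Chernoff/chi-square moment-generating-function argument is the standard route, presumably the one in the cited reference, so the two approaches are essentially the same. One genuinely careful point in your write-up deserves mention: to derive \eqref{normalapprox} from the two stated tail bounds one needs the containment $\{\|g_i\|>\sqrt m(1+\epsilon)\}\subseteq\{\|g_i\|\geq \sqrt m/\sqrt{1-\epsilon}\}$, which requires $(1+\epsilon)^2(1-\epsilon)\geq 1$ and therefore fails for $\epsilon>(\sqrt5-1)/2$. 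Your workaround of returning to the Chernoff bound with $s=(1+\epsilon)^2$ (where already $\log(1+\epsilon)\leq\epsilon$ gives exponent $-\epsilon^2 m/2\leq -\epsilon^2 m/4$) fixes this cleanly; the paper glosses over the point, harmlessly so since \eqref{normalapprox} is only invoked for small $\epsilon$ in the proof of Theorem \ref{principal}.
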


\begin{remark}\label{Gaussian-Haar}
As we already mentioned in the Introduction, it is completely equivalent to sample a unit vector $u \in S^{n-1}$ according to the Haar measure $\mu_n$ to sample normalized gaussian vectors $g=\frac{1}{\|(g_1,\dots, g_n)\|}(g_1,\dots, g_n)$. That is, both probability distributions are exactly the same (see \cite[Section 3.3]{Barvinok} for a more complete explanation). In particular, Theorem \ref{Main theorem} can be equivalently stated as it is in Theorem \ref{principal} and Theorem \ref{principal II}. 
\end{remark}
The following proposition can be easily deduced from \cite[Lemma 2.2]{Gupta} and Remark \ref{Gaussian-Haar}.
\begin{prop}\label{concentraciondenormas}
Let $\mu_n$ be the Haar measure in $S^{n-1}$ and let $L\subset \mathbb R^n$ be an $m$-dimensional subspace. For a vector $u \in \mathbb R^n$ denote by $P_L(u)$ its orthogonal projection onto $L$. Then, for any $0 < \rho < 1$ we have
$$\mu_n\left(u \in S^{n-1}: \, \|P_L(u)\|\geq \frac{1}{1-\rho}\sqrt{\frac{m}{n}}\right) \leq e^{-\frac{\rho^2m}{4}},$$
and
$$\mu_n\left(u \in S^{n-1}: \, \|P_L(u)\|\leq (1-\rho)\sqrt{\frac{m}{n}}\right) \leq e^{-\frac{\rho^2m}{4}}.$$
\end{prop}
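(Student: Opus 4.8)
The plan is to reduce the statement to a tail estimate for the squared norm of a gaussian projection and then quote \cite[Lemma 2.2]{Gupta}. First, by Remark \ref{Gaussian-Haar} I would realize a $\mu_n$-distributed vector as $u=g/\|g\|$, with $g=(g_1,\dots,g_n)$ an $n$-dimensional standard gaussian vector, so that $P_L(u)=P_L(g)/\|g\|$. Since both $\mu_n$ and the standard gaussian law on $\mathbb R^n$ are invariant under orthogonal transformations, the distribution of $\|P_L(u)\|$ depends only on $m=\dim L$, and I may take $L=\operatorname{span}\{e_1,\dots,e_m\}$. (When $m=n$ both events in the statement are empty, since $0<1-\rho<1<\tfrac1{1-\rho}$, so I assume $m<n$.) With this choice $\|P_L(g)\|^2=\sum_{j=1}^m g_j^2$ is a $\chi^2_m$ variable sitting inside the $\chi^2_n$ variable $\|g\|^2=\sum_{j=1}^n g_j^2$, hence
\[
\|P_L(u)\|^2=\frac{\sum_{j=1}^m g_j^2}{\sum_{j=1}^n g_j^2}
\]
has a $\mathrm{Beta}(m/2,(n-m)/2)$ distribution, with mean exactly $m/n$.

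This is precisely the random variable estimated in \cite[Lemma 2.2]{Gupta}, which provides, for every $\beta>1$,
\[
\Pr\Big(\|P_L(u)\|^2\ge \beta\tfrac{m}{n}\Big)\le \exp\Big(\tfrac m2\big(1-\beta+\ln\beta\big)\Big),
\]
and, for every $0<\beta<1$, the same bound for $\Pr\big(\|P_L(u)\|^2\le \beta\tfrac{m}{n}\big)$. I would then plug $\beta=(1-\rho)^{-2}$ into the first inequality and $\beta=(1-\rho)^2$ into the second; after taking square roots the events become exactly the two in the proposition, with the respective bounds $\exp\big(\tfrac m2(1-(1-\rho)^{-2}-2\ln(1-\rho))\big)$ and $\exp\big(\tfrac m2(1-(1-\rho)^2+2\ln(1-\rho))\big)$.

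What is left is to verify the elementary one-variable inequalities $1-(1-\rho)^{-2}-2\ln(1-\rho)\le -\tfrac{\rho^2}{2}$ and $1-(1-\rho)^2+2\ln(1-\rho)\le -\tfrac{\rho^2}{2}$ for $0<\rho<1$, which turn the two bounds above into $e^{-\rho^2 m/4}$. In each case the difference of the two sides is a smooth function of $\rho$ vanishing at $\rho=0$ whose derivative is readily checked to be strictly negative on $(0,1)$, so both inequalities hold; I expect this routine computation to be the only real ``obstacle''. As an alternative avoiding \cite{Gupta}, one may bound $\|P_L(g)\|$ and $\|g\|$ separately via Proposition \ref{concentracion} (from above and from below, respectively, with parameter $\rho$), so that on the intersection of the good events $\|P_L(u)\|=\|P_L(g)\|/\|g\|\le\tfrac1{1-\rho}\sqrt{m/n}$, and symmetrically for the lower bound; but the union bound then produces a harmless extra factor $2$, which is exactly what the sharper Beta-distribution estimate lets one avoid so as to land on $e^{-\rho^2m/4}$.
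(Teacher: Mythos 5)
Your proof is correct and follows exactly the route the paper indicates: the paper does not write out a proof but simply remarks that the proposition ``can be easily deduced from \cite[Lemma 2.2]{Gupta} and Remark \ref{Gaussian-Haar},'' which is precisely the reduction you carry out, supplemented by the two routine single-variable inequalities (both of which check out: $N(s)=(1-s)(s^3-2s-2)<0$ for $s=1-\rho\in(0,1)$ in the first case, and $g''(\rho)=-1-2(1-\rho)^{-2}<0$ with $g(0)=g'(0)=0$ in the second). Your closing remark about the alternative via Proposition \ref{concentracion} costing a factor $2$ is a fair observation but not part of the paper's argument.
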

In this work we will only apply Proposition \ref{concentraciondenormas} in the case when $P_L$ is the projection onto the first $m$ coordinates. In fact, this is equivalent to the general statement due to the rotational invariance of the Haar measure.

We say that a real random $n \times n$ matrix $M$ is bi-orthogonally invariant if the distribution on $M_n(\mathbb R)$ of $M$ is equal to that of $O_1MO_2$ for any orthogonal matrices $O_1$ and $O_2$. It is well known and easy to check that gaussian matrices are bi-orthogonally invariant.

The following result is probably known, but we have not found a reference for it. We write a proof, following the ideas of \cite[Lemma 4.3.10]{Petz2}.
\begin{prop}\label{SVD}
Let $A\in M_n(\mathbb R)$ be an $n \times n$ random matrix in some probability space $(\Xi,\mathbb{P})$. If $A$ is bi-orthogonally invariant then there exist random matrices U and V in $(\Xi,\mathbb{P})$ 
such that \\
(i) $U,V$ follow the Haar distribution in the orthogonal group $\mathcal O(n)$. \\
(ii) $U$ and $V$ are independent.\\
(iii) $U$ and $V$ are the matrices whose columns are respectively the left and right singular vectors associated to the ordered singular values of $A$.
\end{prop}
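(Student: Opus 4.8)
The plan is to exploit bi-orthogonal invariance by conjugating $A$ with two auxiliary, independent Haar matrices so as to ``symmetrise'' its singular vectors, and then to transport the resulting joint distribution back to $A$.

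First I would fix a measurable selection of the singular value decomposition: measurable maps $M\mapsto U_0(M)$, $M\mapsto V_0(M)$ and $M\mapsto\Sigma(M)$ with $U_0(M),V_0(M)\in\mathcal{O}(n)$, with $\Sigma(M)$ the diagonal matrix of the singular values of $M$ listed in non-increasing order, and with $M=U_0(M)\,\Sigma(M)\,V_0(M)^T$ for every $M$; the existence of such a selection is standard. Enlarging $(\Xi,\mathbb{P})$ if necessary so that it carries two matrices $P,Q$ Haar distributed on $\mathcal{O}(n)$ together with a random variable uniform on $[0,1]$, all mutually independent and independent of $A$, I would set
\[
B:=PAQ^T,\qquad U_1:=P\,U_0(A),\qquad V_1:=Q\,V_0(A),\qquad \Sigma:=\Sigma(A).
\]
Conditioning on $(P,Q)$ and using that $A$ is bi-orthogonally invariant gives $B\overset{d}{=}A$; moreover $B=P\,U_0(A)\,\Sigma\,V_0(A)^TQ^T=U_1\Sigma V_1^T$, and $\Sigma=\Sigma(A)=\Sigma(B)$ since orthogonal conjugation does not change the singular values. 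So $(U_1,\Sigma,V_1)$ is an SVD of $B$ whose factors $U_1,V_1$ have as columns precisely the left, respectively right, singular vectors of $B$ associated to the ordered singular values.

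The core of the argument is the claim that \emph{$U_1$ and $V_1$ are independent and each Haar distributed on $\mathcal{O}(n)$}. I would prove this by conditioning on $A$: given $A$, the pair $(P,Q)$ is still a pair of independent Haar matrices and $U_0(A),V_0(A)$ are fixed orthogonal matrices, so by translation invariance of the Haar measure $P\,U_0(A)$ and $Q\,V_0(A)$ are each Haar distributed, and by independence of $P$ and $Q$ they are independent of one another. Hence the conditional law of $(U_1,V_1)$ given $A$ is a product of two copies of the Haar measure, and it does not depend on the value of $A$; integrating over $A$ establishes the claim.

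It remains to transport this back to $A$. The joint law of $(B,U_1,V_1)$ has first marginal equal to the law of $A$, is supported on triples $(b,u,v)$ satisfying $b=u\,\Sigma(b)\,v^T$, and under it $u$ and $v$ are independent and Haar distributed; the standard transfer of a joint law onto a prescribed marginal (using the uniform random variable and regular conditional distributions) then produces matrices $U,V$ on $(\Xi,\mathbb{P})$ with $(A,U,V)\overset{d}{=}(B,U_1,V_1)$, and these $U$ and $V$ satisfy (i), (ii) and (iii). The main obstacle is exactly the non-uniqueness of the SVD --- sign changes of paired singular vectors, and more generally block rotations coming from repeated or vanishing singular values: this is what makes it impossible to simply push the Haar randomness of a fixed measurable SVD directly onto the singular vectors of $A$ itself, and it is the reason the proof is run at the level of joint distributions through the auxiliary matrix $B=PAQ^T$. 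The only other point requiring a (classical) justification is the existence of the measurable SVD selection used at the outset.
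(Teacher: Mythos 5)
Your proof is correct, and it takes a genuinely different route from the paper's. The paper works directly with a randomized singular value decomposition of $A$ itself: assuming the singular values are a.s.\ distinct, it fixes $U_0(A),V_0(A)$ up to signs, attaches an independent diagonal random sign matrix $S$ to both factors, and then shows that for any fixed $O_1,O_2\in\mathcal O(n)$ the triple $(O_1U,\Sigma,V^*O_2)$ has the same law as $(U,\Sigma,V^*)$ (because $O_1AO_2\overset{d}{=}A$ and the randomized sign convention is compatible with the orthogonal translation); averaging over Haar-distributed $O_1,O_2$ then gives the product form $\mu_n\otimes\mathbb P_\Sigma\otimes\mu_n$. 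Your proof instead keeps a deterministic measurable selection $U_0,V_0,\Sigma$, conjugates $A$ to $B=PAQ^T$ with $P,Q$ independent Haar and independent of $A$, observes that $(U_1,V_1)=(PU_0(A),QV_0(A))$ is an SVD factor pair for $B$ that is \emph{visibly} product-Haar and independent of $A$ (hence of $\Sigma(B)$) by conditioning on $A$ and using translation invariance, and finally transfers the joint law of $(B,U_1,V_1)$ back onto the prescribed marginal $A$ via a regular conditional distribution and an auxiliary uniform. The two arguments exploit exactly the same bi-orthogonal invariance and the same ``average over Haar conjugates'' mechanism, so they are close in spirit, but the mechanics are different. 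What your version buys: it handles the non-uniqueness of the SVD (sign flips and block rotations from repeated or zero singular values) uniformly, without the paper's simplifying assumption of a.s.\ distinct singular values or its footnoted patch involving random orthonormal bases in eigenspaces. What the paper's version buys: it is more elementary, staying entirely at the level of the original random matrix and avoiding the regular-conditional-distribution / noise-outsourcing machinery needed for the transfer step. Both proofs enlarge the underlying probability space (the paper by the sign-randomization $\zeta$, you by $P,Q$ and the auxiliary uniform), which is consistent with the proposition being read as allowing such an enlargement.
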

\begin{proof}
For simplicity, we will assume that the set of matrices with repeated singular values has zero measure (as it happens in the gaussian case, which is the one we will use here). In this case, the singular value decomposition is unique with probability one up to the choice of the sign of the right (or left) singular vectors\footnote{The general case follows by considering the set $\mathcal{V}_i$ of right singular vectors associated to the singular value $s_i$, and  taking random choices of orthonormal vectors in $\mathcal{V}_i$ as the associated columns of the matrix $V$. The measure in $\mathcal{V}_i$ is the one induced by the Haar measure, that is, the measure invariant under unitary transformations of $\mathcal{V}_i$ into itself.}. Let $A$ be a random matrix defined in some space $(\Xi,\mathbb{P})$, and let $A(\xi)=U(\xi)\Sigma(\xi)V^*(\xi)$ be the singular value decomposition of $A(\xi)$ where the singular values of $\Sigma(\xi)$ are ordered in decreasing order and the sign (of the first non zero coordinate) of the right singular vectors are taken at random with probability $1/2$. \footnote{We see $A(\xi)=U(\xi)\Sigma(\xi)V^*(\xi)$ with $U(\xi)=U_0(\xi)S(\zeta)$ and $V^*(\xi)=S(\zeta)V_0^*(\xi)$, where $U_0(\xi)$ and $V_0(\xi)$ are a particular choice of matrices in the singular value decomposition and $S(\zeta)$ is a diagonal random matrix wihose entries are i.i.d. $\pm 1$ valued centered Bernoulli random  variables.}

The random matrices $U$ and $V$ fulfill (iii) by construction. To prove (i) and (ii) it is enough to show that for any $B_1,B_2\subset \mathcal O(n)$ and $\Delta\subset M_n(\mathbb R)$ Borel sets, we have
\begin{equation*}
\mathbb{P}\left(U(\xi)\in B_1,\Sigma(\xi) \in \Delta,V^{*}(\xi)\in B_2\right)=\mu_n(B_1) \mathbb{P}\left(\Sigma(\xi)\in \Delta\right)\mu_n(B_2),
\end{equation*}
where $\mu_n$ is the Haar measure in the orthogonal group.

It follows from the biorthogonal invariance of $A$ that for any two (fixed for now) orthogonal matrices $O_1$ and $O_2$, the random matrix $A'=O_1AO_2$ has the same distribution as $A$. Defining $U'=O_1 U$, ${V'^*}=V^*O_2$, it is clear that $A'=U'\Delta {V'^{*}}$ is a singular value decomposition of $A'$ verifying our requirements. Therefore 
\begin{equation*}
\mathbb{P}\left(U(\xi)\in B_1,\Sigma(\xi) \in \Delta,V^{*}(\xi)\in B_2\right)=
\mathbb{P}\left(O_1U(\xi)\in B_1,\Sigma(\xi) \in \Delta,V^{*}(\xi)O_2\in B_2\right).\end{equation*}

If we now let $O_1$ and $O_2$ be distributed according to the Haar measure $\mu_n$ in different probability spaces $(\Xi',\mathbb{P}')$ and $(\Xi'',\mathbb{P}'')$ respectively, we get 
\begin{align*}
&\mathbb{P}(U(\xi)\in B_1,\Sigma(\xi) \in \Delta,V^{*}(\xi)\in B_2)\\
& =(\mathbb{P}'\otimes \mathbb{P}\otimes \mathbb{P}'')(U'(\xi,\xi')\in B_1,\Sigma(\xi) \in \Delta,V^{'*}(\xi,\xi'')\in B_2) \\
&=\int \int \int \chi_{B_1} (O_1(\xi')U(\xi))\chi_\Delta(\Sigma(\xi))\chi_{B_2}(V^*(\xi)O_2(\xi''))d\mathbb{P}(\xi)d\mathbb{P}'(\xi')d\mathbb{P}''(\xi'')\\
& =\int \left(\int \chi_{B_1} (O_1(\xi')U(\xi))d\mathbb{P}'(\xi')\right)\chi_\Delta(\Sigma(\xi))\left(\int\chi_{B_2}(V^*(\xi)O_2(\xi'')) d\mathbb{P}''(\xi'')\right)d\mathbb{P}(\xi) \\
&= \int \mu_n(B_1) \chi_\Delta(\Sigma(\xi))\mu_n(B_2)d\mathbb{P}(\xi) \\
&=\mu_n(B_1) \mathbb{P}(\Sigma(\xi)\in \Delta)\mu_n(B_2),
\end{align*} 
where the fourth equality follows from the rotational invariance of the Haar measure.
\end{proof}

\begin{remark}\label{muestreo}
We will use later the following easy consequence of  Proposition \ref{SVD}: For every $n\in \mathbb N$ there exists a probability space $\Xi$ with three $n\times n$ random matrices $A, U, V$ defined on it such that  $A$ is a gaussian matrix, $U,V$ are independent and Haar distributed in $\mathcal O(n)$, and for almost every $\xi \in \Xi$, $U(\xi)$ and $V(\xi)$ are the right and left singular values of $A(\xi)$ arranged in decreasing order of the singular values.   
\end{remark}
We will need the Marcenko-Pastur law, describing the distribution of the singular values of random matrices:
\begin{theorem}[Marcenko-Pastur law, \cite{MaPa}]\label{Marcenko-Pastur}
Let $A$ be an $n \times  n$ random matrix whose entries $a_{ij}$ are independent real random variables with zero mean and unit
variance. Let $C\in[0, 2]$. With probability $1-o(1)$, the number of singular values $\lambda$ of $A$ that satisfy $\lambda\geq C\sqrt{n}$ is $(f(C)-o(1))n$, where
$$f(C)=\frac{1}{2\pi}\int_{x=C^2}^4\sqrt{\frac{4}{x}-1}dx.$$Here, we say that $h=h(n)$ is $o(1)$ if and only if $\lim_{n\rightarrow \infty}h(n)=0$.
\end{theorem}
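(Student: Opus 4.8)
The plan is to read Theorem~\ref{Marcenko-Pastur} as the statement that the empirical spectral distribution of the Wishart-type matrix $W:=\frac1n AA^{T}$ converges to the Marchenko--Pastur law of aspect ratio one, to prove that convergence by the method of moments, and then to translate it into the stated counting form. First I would reduce to eigenvalues: the singular values $\lambda_1\ge\cdots\ge\lambda_n$ of $A$ satisfy $\lambda_i^2=n\,s_i$, where $s_1\ge\cdots\ge s_n$ are the eigenvalues of $W$, so $\{\lambda_i\ge C\sqrt n\}=\{s_i\ge C^2\}$. Writing $\mu_n=\frac1n\sum_{i=1}^n\delta_{s_i}$ and letting $\mu_{MP}$ be the probability measure with density $\frac1{2\pi}\sqrt{4/x-1}$ on $(0,4]$, one has $f(C)=\mu_{MP}([C^2,4])$, and the distribution function of $\mu_{MP}$ is continuous on all of $\R$ (the density has only an integrable singularity at $0$), so every $C^2\in[0,4]$ is a continuity point. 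Hence it is enough to prove $\mu_n\to\mu_{MP}$ weakly in probability: this gives $\mu_n([0,C^2])\to1-f(C)$ in probability, so $\#\{i:s_i\ge C^2\}=(f(C)-o(1))n$ with probability $1-o(1)$, and replacing ``$\ge$'' by ``$>$'' changes the count by only $o(n)$ since $\mu_{MP}$ has no atom at $C^2$.

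For the weak convergence I would use the moment method. The $k$-th moment is $\int x^k\,d\mu_n=\frac1n\Tr(W^k)=\frac1{n^{k+1}}\sum a_{i_1j_1}a_{i_2j_1}a_{i_2j_2}a_{i_3j_2}\cdots a_{i_kj_k}a_{i_1j_k}$, a sum over closed walks of length $2k$ alternating between ``row'' vertices $i_1,\dots,i_k$ and ``column'' vertices $j_1,\dots,j_k$ ranging over $\{1,\dots,n\}$. After taking expectations, independence and the zero mean of the entries force every edge of the walk to be traversed at least twice; the walks contributing at the leading order $n^{k+1}$ turn out to be exactly those with $k+1$ distinct vertices, which then use each of their $k$ distinct edges exactly twice and whose edge set is a spanning tree, and a combinatorial count of such walks (of bipartite, non-crossing type) gives $\frac1n\,\mathbb E\,\Tr(W^k)\to C_k=\frac1{k+1}\binom{2k}{k}$, the $k$-th Catalan number, which one checks equals $\int x^k\,d\mu_{MP}$. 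Since the entries are only assumed to have zero mean and unit variance, this computation needs a preliminary truncation: replace $a_{ij}$ by $a_{ij}\mathbf 1_{\{|a_{ij}|\le\delta\sqrt n\}}$, recenter and rescale, and use the rank inequality together with the Hoffman--Wielandt inequality to check that $\mu_n$ is perturbed by an amount controlled by $\delta$; after truncation the entries have enough moments for the walk estimates to run. A routine bound $\mathrm{Var}\big(\frac1n\Tr(W^k)\big)=O(n^{-2})$ upgrades the convergence of the expectations to convergence in probability, and since $\mu_{MP}$ is compactly supported it is determined by its moments, so $\mu_n\to\mu_{MP}$ weakly in probability.

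The real work --- and the main obstacle to a short argument --- is the bookkeeping inside the moment computation (isolating the tree-like walks and verifying that all remaining walk types contribute $o(n^{k+1})$) together with the truncation step; neither is deep, but both are long, which is no doubt why the paper simply cites \cite{MaPa}. An equally viable alternative is the Stieltjes transform method: a leave-one-row-out / Schur complement analysis shows that $s_n(z)=\frac1n\Tr\big((W-zI)^{-1}\big)$ satisfies, up to an error vanishing in probability, the self-consistent equation $z\,s^2+z\,s+1=0$; its unique solution in the upper half-plane with $s(z)\sim-1/z$ as $z\to\infty$ is the Stieltjes transform of $\mu_{MP}$, and the continuity of the correspondence between Stieltjes transforms and measures then yields the weak convergence. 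Either way, the translation in the first paragraph completes the proof; note in particular that since the statement only involves $\mu_n([0,C^2])$ for $C\le 2$, no control of the top of the spectrum is needed.
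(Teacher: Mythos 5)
The paper does not prove Theorem~\ref{Marcenko-Pastur}; it states it as a known background result and cites Marcenko and Pastur \cite{MaPa}, so there is no internal proof to compare yours against. Your outline is a correct high-level account of a standard proof by the moment method: the reduction from singular values of $A$ to eigenvalues of $W=\frac{1}{n}AA^{T}$, the identification $f(C)=\mu_{MP}([C^2,4])$, the continuity-point argument (including the open-versus-closed endpoint issue, which you dispose of correctly since $\mu_{MP}$ has no atoms), the Catalan-number moment computation via tree-like closed walks, the truncation step needed because the entries are only assumed to have two finite moments, and the $O(n^{-2})$ variance bound that upgrades expectations to convergence in probability are all the right ingredients; you also correctly observe that the restriction $C\in[0,2]$ means no control of the spectral edge is required. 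One point of comparison with the cited source is worth making: the original argument in \cite{MaPa} is precisely the Stieltjes-transform / self-consistent-equation method you offer as an alternative, not the moment method --- it is shorter once the resolvent identities are set up and it is what the reference actually does, whereas your primary route via moments is more combinatorial and elementary but requires the walk bookkeeping and truncation work that you explicitly flag as the bulk of the effort. Either route is a legitimate proof of the statement as used in the paper.
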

We state for completeness the version of Grothendieck's inequality most useful for our purposes (see \cite[Page 172]{DeFl}).
\begin{theorem}[Grothendieck's inequality]\label{Grothendieck}
There exists a universal constant $K_G$ such that for every natural number $n$ and for every real matrix $(a_{i,j})_{i,j=1}^n$ we have
\begin{align*}
\sup\Big\{\Big|\sum_{i,j=1}^na_{i,j}\langle x_i,y_j\rangle\Big|: x_i,y_j\in B_H\Big\}\leq K_G\sup\Big\{\Big|\sum_{i,j=1}^na_{i,j} s_it_j\Big|: s_i,t_j=\pm 1 \Big\},
\end{align*}where the first supremum runs over elements $x_1,\dots, x_n,y_1,\dots, y_n$ in the unit ball of a real Hilbert space $H$.

The exact value of $K_G$ is still unknown but we have $1.67696...\leq K_G \leq 1.78221...$.
\end{theorem}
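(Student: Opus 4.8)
The plan is to handle the two parts by entirely different routes, after a common reduction. By Remark~\ref{Gaussian-Haar} one may replace the Haar-distributed $u_i,v_j$ by normalized i.i.d.\ gaussian vectors $g_i/\|g_i\|$, $h_j/\|h_j\|$ in $\mathbb R^m$, and since $m=\alpha n\to\infty$, estimate~\eqref{normalapprox} lets us work, up to a factor $1\pm o(1)$ holding with probability $1-o(1)$, with the linearized matrix $\widetilde\gamma=\tfrac1m GH^{T}$, where $G,H$ are independent $n\times m$ gaussian matrices with rows $g_i,h_j$. The other basic ingredient is the duality: $\gamma\in\mathcal L$ if and only if $\sum_{i,j}a_{i,j}\gamma_{i,j}\le\omega(A)$ for \emph{every} real matrix $A=(a_{i,j})$, where $\omega(A)=\sup\{\sum_{i,j}a_{i,j}s_it_j:s_i,t_j=\pm1\}$ (this is the bipolar of the polytope $\mathcal L=\operatorname{conv}\{st^{T}\}$, i.e.\ the quantitative reformulation of ``$\mathcal L\subsetneq\mathcal Q$'' recalled in the Introduction).

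For part a) I would exhibit a single violated Bell functional, the natural one being $A=\operatorname{sign}(\gamma)$ taken entrywise, so that $\sum_{i,j}a_{i,j}\gamma_{i,j}=\sum_{i,j}|\gamma_{i,j}|$. Since each $\gamma_{i,j}$ equals, up to $1+o(1)$, $m^{-1/2}$ times a standard gaussian, a first/second moment argument — conditioning on $G$, using that the rows of $H$ are independent and distinct $g_i$ almost orthogonal — gives, with probability $1-o(1)$,
\[
\textstyle\sum_{i,j}|\gamma_{i,j}|\ \ge\ (1-o(1))\sqrt{2/\pi}\;\frac{n^{2}}{\sqrt m}\ =\ (1-o(1))\sqrt{2/\pi}\;\frac{n^{3/2}}{\sqrt\alpha}.
\]
The hard part is to bound $\omega(\operatorname{sign}(\gamma))=\|\operatorname{sign}(\gamma)\|_{\ell_\infty^n\to\ell_1^n}$ \emph{from above} by some $c(\alpha)n^{3/2}$: $\operatorname{sign}(\gamma)$ is not an i.i.d.\ sign matrix, and writing $\operatorname{sign}(\gamma)=\sqrt{2/\pi}\,\sqrt m\,\gamma+R$ (Hermite decomposition of $\operatorname{sign}$ in the normalized inner products) separates a low-rank ``$\gamma$-part'', of the critical operator norm $\Theta(\sqrt n)$ and carrying correlations between rows, from a near-i.i.d.\ remainder $R$. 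Controlling the singular-value structure of the $\gamma$-part is exactly what the random-matrix estimate of \cite{GPV} is for (the same circle of ideas used in \cite{Ambainis} for $\omega,\omega^{*}$ of random sign matrices); together with the crude but serviceable $\omega(M)\le n\|M\|_{\mathrm{op}}$ and a random-matrix bound $\omega(R)\lesssim n^{3/2}$ for the remainder, this yields an explicit $c(\alpha)$. When $\alpha$ is small enough, $\sqrt{2/\pi}\,\alpha^{-1/2}>c(\alpha)$, forcing $\sum a_{i,j}\gamma_{i,j}>\omega(A)$, i.e.\ $\gamma\notin\mathcal L$; intersecting finitely many $1-o(1)$ events finishes part a). The slack in this $\omega$-bound is precisely what keeps the admissible $\alpha_0$ near $0.004$ rather than near $1$, and this step is the only genuinely difficult one in the whole theorem.

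For part b) the plan rests on a soft sufficient condition for locality: \emph{if $\max_i\big(\sum_j\gamma_{i,j}^{2}\big)^{1/2}\le 1/\sqrt2$ then $\gamma\in\mathcal L$}. Indeed, for any real $A$, averaging over $t\in\{\pm1\}^n$ and using Khintchine's inequality with its optimal constant $1/\sqrt2$,
\[
\omega(A)\ \ge\ \mathbb E_t\,\sup_s\,\textstyle\sum_{i,j}a_{i,j}s_it_j\ =\ \mathbb E_t\sum_i\big|\textstyle\sum_j a_{i,j}t_j\big|\ \ge\ \tfrac1{\sqrt2}\sum_i\big(\textstyle\sum_j a_{i,j}^{2}\big)^{1/2},
\]
while by Cauchy--Schwarz $\sum_{i,j}a_{i,j}\gamma_{i,j}\le\sum_i\big(\sum_j a_{i,j}^{2}\big)^{1/2}\big(\sum_j\gamma_{i,j}^{2}\big)^{1/2}\le\big(\max_i\|\gamma_{i\cdot}\|_2\big)\sum_i\big(\sum_j a_{i,j}^{2}\big)^{1/2}\le\sqrt2\,\big(\max_i\|\gamma_{i\cdot}\|_2\big)\,\omega(A)$; under the hypothesis this is $\le\omega(A)$, and the same for $-A$ gives $|\sum a_{i,j}\gamma_{i,j}|\le\omega(A)$, hence $\gamma\in\mathcal L$ by the duality above. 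It remains to estimate the row norms: conditionally on $u_i$, $\|\gamma_{i\cdot}\|_2^{2}=\sum_j\langle u_i,v_j\rangle^{2}$ is, in law, $m^{-1}$ times a $\chi^{2}_n$ variable up to lower-order terms, so a routine concentration estimate (in the spirit of Propositions~\ref{concentracion} and~\ref{concentraciondenormas}) gives $\|\gamma_{i\cdot}\|_2^{2}=(1+o(1))\,n/m$ with probability $1-e^{-\Omega(n)}$, uniformly in $i$ after a union bound. For $\alpha>2$ this forces $\max_i\|\gamma_{i\cdot}\|_2\le(1+o(1))\alpha^{-1/2}<1/\sqrt2$ for $n$ large, so the sufficient condition applies and $\gamma$ is local with probability $1-o(1)$. (Replacing Khintchine's $1/\sqrt2$ in the lemma by a Grothendieck-type constant is what would, in principle, push the threshold from $2$ down to $K_G$, as the paper remarks; this softer argument does not reach it.)
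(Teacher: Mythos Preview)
Your proposal does not address the stated theorem. The statement given is Grothendieck's inequality (Theorem~\ref{Grothendieck}), which the paper does \emph{not} prove: it is quoted from \cite{DeFl} as a tool and used as a black box. What you have written is a proof sketch for Theorem~\ref{Main theorem}, not for Grothendieck's inequality.

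Reading your proposal instead as an attempt at Theorem~\ref{Main theorem}: your part~b) is essentially the paper's own argument. Your ``soft sufficient condition for locality'' via Khintchine is exactly Lemma~\ref{theorem 1-summing} (dually stated), and the row-norm concentration is precisely the content of the proof of Theorem~\ref{principal II}. So for part~b) you and the paper coincide.

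Your part~a), however, diverges from the paper and contains a genuine gap. The paper does \emph{not} take $A=\operatorname{sign}(\gamma)$. It takes $A$ to be a gaussian matrix and writes $A=U\Sigma V^{*}$ with $U,V$ independent Haar-orthogonal (Proposition~\ref{Ambainisreinterpretado}); the Marcenko--Pastur law gives $\sum_{i,j}a_{i,j}\langle \tfrac{\sqrt n}{\sqrt m}u_i,\tfrac{\sqrt n}{\sqrt m}v_j\rangle\ge(\delta-o(1))n^{3/2}$ while $\omega(A)\le 1.6651\ldots n^{3/2}$. Then Theorem~\ref{Theorem Decoupling} from \cite{GPV} shows that the truncated rows of $U,V$ are close, in the sup-of-$\ell_2$-row distance, to normalized gaussian vectors, and Grothendieck's inequality transfers the violation across this perturbation. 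Your route via $A=\operatorname{sign}(\gamma)$ fails at the step you flag as hardest: with $\widetilde\gamma\approx\tfrac1m GH^{T}$ one has $\|\sqrt m\,\widetilde\gamma\|_{\mathrm{op}}\approx (\sqrt n+\sqrt m)^2/\sqrt m=(1+\sqrt\alpha)^2\sqrt{n/\alpha}$, so the ``crude but serviceable'' bound $\omega(M)\le n\|M\|_{\mathrm{op}}$ applied to the rank-$m$ part already yields
\[
\omega\big(\sqrt{2/\pi}\,\sqrt m\,\gamma\big)\ \lesssim\ \sqrt{2/\pi}\,(1+\sqrt\alpha)^{2}\,\frac{n^{3/2}}{\sqrt\alpha},
\]
which strictly exceeds your lower bound $\sqrt{2/\pi}\,n^{3/2}/\sqrt\alpha$ on $\sum_{i,j}|\gamma_{i,j}|$ for \emph{every} $\alpha>0$, before any contribution from the remainder $R$ is added. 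The inequality therefore cannot close with this bound, no matter how small $\alpha$ is. Furthermore, \cite{GPV} is about the sup-row-$\ell_2$ distance between gaussian and Haar-orthogonal matrices, not about singular values of $\gamma$ or of $\operatorname{sign}(\gamma)$; invoking it does not supply the sharper $\omega$-estimate you would need. Without a different, structure-aware bound on $\omega(\operatorname{sign}(\gamma))$, this approach to part~a) does not go through.
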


Finally, we will introduce two norms that appear naturally in our context. We will use them in Section \ref{Section: upper bound}. We will denote by $\ell_\infty^n\otimes_{\pi} \ell_\infty^n$ the space of $n\times n$ real matrices $M$ endowed with the norm $$\|M\|_{\ell_\infty^n\otimes_{\pi} \ell_\infty^n}=\inf \Big\{\sum_{k=1}^N \lambda_k:\lambda_k\geq 0, M=\sum_{k=1}^N \lambda_k\eta_k\Big\},$$where  $\eta_k$ denotes the matrix associated to a deterministic, therefore local, correlation. That is, for every $k$ we have that $\eta_k=a_k\otimes b_k$ for certain sign vectors $a_k,b_k\in \mathbb R^n$. The reader will immediately realize that given a correlation matrix $\gamma$, we have 
\begin{align}\label{local-projective}
\|\gamma\|_{\ell_\infty^n\otimes_{\pi}\ell_\infty^n}\leq 1 \text{      } \text{    if and only if  $\gamma$ is local. }
\end{align}

In addition, we will denote by $\ell_1^n\otimes_{\epsilon} \ell_1^n$ the space of $n\times n$ real matrices $M$ endowed with the norm $$\|M\|_{\ell_1^n\otimes_{\epsilon} \ell_1^n}=\sup\Big\{\Big|\sum_{i,j=1}^nM_{i,j} s_it_j\Big|: s_i,t_j=\pm 1 \Big\}.$$Now, it becomes completely trivial to see that for a correlation Bell inequality $A$, we have
\begin{align}\label{local-injective}
\|A\|_{\ell_1^n\otimes_{\epsilon} \ell_1^n}=\omega(A).
\end{align}

The tensor product notation for the norms is due to the fact that these are actually tensor norms on $\mathbb R^n\otimes \mathbb R^n$. However, we will not use this property explicitly in this paper. The duality between local correlation matrices and correlation Bell inequalities is mathematically expressed as
\begin{align}\label{duality epsilon-pi}
&\|\gamma\|_{\ell_\infty^n\otimes_{\pi}\ell_\infty^n}=\sup\big\{\big|\sum_{i,j=1}^nA_{i,j}\gamma_{i,j}\big|: \|A\|_{\ell_1^n\otimes_{\epsilon}\ell_1^n}\leq 1\big\}\text{      } \text{    and   }\\\nonumber&\|A\|_{\ell_1^n\otimes_{\epsilon}\ell_1^n}=\sup\big\{\big|\sum_{i,j=1}^nA_{i,j}\gamma_{i,j}\big|: \|\gamma\|_{\ell_\infty^n\otimes_{\pi}\ell_\infty^n}\leq 1\big\}.
\end{align}

\section{A lower bound for $\alpha_0$: Part a) of Theorem \ref{Main theorem}}\label{Section: lower bound}
The following result is implicit in the paper \cite{Ambainis}. It provides an abundance of nonlocal correlations very close to being quantum and it is the starting point of our work.
\begin{prop}\label{Ambainisreinterpretado} Let $U=(u_{i,j})_{i,j=1}^n, V=(v_{i,j})_{i,j=1}^n$ 
be two independent orthogonal random matrices distributed according to the Haar measure on the orthogonal group $\mathcal O(n)$. Let $\alpha\in (0,1)$ and $m=\alpha n$. We also denote $\delta=f^{-1}(\alpha)$, where $f$ is the Marcenko-Pastur densitiy function as in Theorem \ref{Marcenko-Pastur}. Let  $\gamma_{i,j}=\langle \frac{\sqrt{n}}{\sqrt{m}}u_i ,\frac{\sqrt{n}}{\sqrt{m}}v_j \rangle$ with $ u_i=(u_{i,k})_{k=1}^m$ $v_j=(v_{j,k})_{k=1}^m$. Then there exists a random $n\times n$ matrix $A=(a_{i,j})_{i,j=1}^n$ such that, with probability $1-o(1)$, $$\sum_{i,j=1}^n a_{i,j} \gamma_{i,j} \geq (\delta-o(1)) n^{\frac{3}{2}} \text{      }\text{    and   }\text{      } \omega(A)\leq  1.6651\dots n^{\frac{3}{2}}.$$
\end{prop}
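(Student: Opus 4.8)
The plan is to take for $A$ the Gaussian matrix whose ordered singular vectors are exactly $U$ and $V$. Precisely, using Remark \ref{muestreo} I would work on a probability space carrying $U$, $V$ and an $n\times n$ Gaussian matrix $G$ with $G=U\Sigma V^{T}$ and $\Sigma=\operatorname{diag}(s_1\ge s_2\ge\cdots\ge s_n)$ the ordered singular values of $G$ (if the convention of Remark \ref{muestreo} exchanges the roles of the two factors, I would use $G^{T}$ instead, which is again Gaussian and has the same singular values). Then I set $A:=G$.

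With this choice the estimate for $\omega(A)$ is immediate: $A$ is Gaussian, so $\omega(A)\le 1.6651\dots n^{3/2}$ with probability $1-o(1)$ is exactly the bound of \cite{Ambainis} (whose argument, as recalled in the Introduction, works verbatim for Gaussian entries). For a self-contained proof of this, I would invoke Proposition \ref{Chernoff}: for a fixed pair of sign vectors $s,t\in\{\pm1\}^n$ the quantity $\sum_{i,j}a_{i,j}s_it_j$ is a centered Gaussian whose coefficient vector $(s_it_j)_{i,j}$ has Euclidean norm $n$, so $\Pr\bigl(|\sum_{i,j}a_{i,j}s_it_j|\ge u\bigr)\le 2e^{-u^{2}/2n^{2}}$; taking $u=(2\sqrt{\ln 2}+\epsilon)n^{3/2}$ with $2\sqrt{\ln 2}=1.66510\dots$ and a union bound over the $4^{n}$ pairs $(s,t)$ finishes it.

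For $\sum_{i,j}a_{i,j}\gamma_{i,j}$, write $U^{(k)},V^{(k)}$ for the $k$-th columns of $U,V$, so that $u_{i,k}=U^{(k)}_i$ and $v_{j,k}=V^{(k)}_j$ for $k\le m$. Then $\gamma=\tfrac{n}{m}\sum_{k=1}^{m}U^{(k)}(V^{(k)})^{T}$ while $G=\sum_{k=1}^{n}s_k\,U^{(k)}(V^{(k)})^{T}$, and since $\sum_{i,j}a_{i,j}\gamma_{i,j}=\tr(\gamma G^{T})$, orthonormality of the columns of $U$ and of $V$ collapses the double sum to $\sum_{i,j}a_{i,j}\gamma_{i,j}=\tfrac{n}{m}\sum_{k=1}^{m}s_k$. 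Because $s_1\ge\cdots\ge s_m$ this is at least $n\,s_m$, so the whole problem reduces to a lower bound for the $m$-th singular value of $G$, and this is where the Marcenko--Pastur law (Theorem \ref{Marcenko-Pastur}) enters and is the crux of the argument. The function $f$ there is continuous and strictly decreasing on $[0,2]$ with $f(0)=1$, $f(2)=0$, so $\delta=f^{-1}(\alpha)\in(0,2)$ is well defined and $f(\delta-\epsilon)>\alpha$ for every $\epsilon\in(0,\delta)$; by Theorem \ref{Marcenko-Pastur}, with probability $1-o(1)$ the number of $k$ with $s_k\ge(\delta-\epsilon)\sqrt n$ equals $(f(\delta-\epsilon)-o(1))n>\alpha n=m$ for $n$ large, which forces $s_m\ge(\delta-\epsilon)\sqrt n$ and hence $\sum_{i,j}a_{i,j}\gamma_{i,j}\ge(\delta-\epsilon)n^{3/2}$. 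Letting $\epsilon=\epsilon(n)\to0$ slowly (a standard diagonal argument) upgrades this to $\sum_{i,j}a_{i,j}\gamma_{i,j}\ge(\delta-o(1))n^{3/2}$ with probability $1-o(1)$, and intersecting with the event of the previous paragraph gives both claims simultaneously.

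I do not anticipate a genuine obstacle. The points needing care are: (i) that $U$ and $V$ can really be coupled with a Gaussian $G$ having them as its ordered singular vectors, which is exactly Remark \ref{muestreo}/Proposition \ref{SVD}; (ii) the elementary analytic facts about $f$ and the passage from "$(f(\delta-\epsilon)-o(1))n$ singular values above $(\delta-\epsilon)\sqrt n$" to "$s_m\ge(\delta-\epsilon)\sqrt n$"; and (iii) keeping track of the two $o(1)$'s. The one conceptual subtlety worth flagging is that we are \emph{not} bounding the full quantum value $\omega^{*}(A)$ from below — that would require a cleverer strategy, as in \cite{Ambainis} — but only the pairing $\langle A,\gamma\rangle$ of $A$ with the single, explicit correlation matrix $\gamma$, which is all the Proposition asks for and is what makes it essentially a repackaging of the computation in \cite{Ambainis}.
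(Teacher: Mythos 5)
Your proposal reproduces the paper's own argument: take $A$ to be the Gaussian matrix from Remark \ref{muestreo} whose ordered singular vectors are $U$ and $V$, use orthonormality to collapse $\sum_{i,j}a_{i,j}\gamma_{i,j}$ to $\tfrac{n}{m}\sum_{k\le m}s_k\ge n\,s_m$ and lower-bound $s_m$ via Marcenko--Pastur, and use the Chernoff/union-bound computation of \cite{Ambainis} to control $\omega(A)$. The only difference is that you spell out details the paper leaves implicit (the $\epsilon$-and-diagonalization step hiding behind ``$\lambda_m\ge(\delta-o(1))\sqrt n$,'' the rank-one expansion of $\gamma$ and $G$, and the explicit Chernoff calculation), so this is the same proof, not a different route.
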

\begin{proof}
We consider $A, U, V$ distributed as in Remark \ref{muestreo}. So, $A=U\Sigma V^*$, where $\Sigma$ is the diagonal matrix of the singular values which we may assume to be arranged in decreasing order. Let $\lambda_1,...,\lambda_m$ be the greatest $m$ singular values of $A$. According to our choice of $\delta$, it follows from Theorem \ref{Marcenko-Pastur} that $\lambda_m\geq  (\delta-o(1))\sqrt{n}$ with probability $1-o(1)$. Then, we have 
\begin{align*}\sum_{i,j=1}^n a_{i,j}\Big\langle \frac{\sqrt{n}}{\sqrt{m}}u_i,\frac{\sqrt{n}}{\sqrt{m}}v_j \Big\rangle&=\frac{n}{m} \sum_{i,j=1}^n a_{i,j}\langle u_i  , v_j \rangle=\frac{n}{m} \sum_{k=1}^m \lambda_k\geq (\delta-o(1))n^{\frac{3}{2}}.
\end{align*}
This proves the first inequality of our statement. For the second one, note that $A$ is a gaussian matrix. Then, the result follows exactly as in \cite[Theorem 4]{Ambainis} from the Chernoff bound of Lemma \ref{Chernoff} taking 
$t=\left(2\sqrt{\ln 2}+2\frac{\sqrt{\ln n}}{\sqrt{n}}\right) n^{3/2}$ and applying a union bound argument.
\end{proof}
Now we can state and prove the first part of Theorem \ref{Main theorem}. It states that most correlations will be nonlocal when $m$ is of the order $\alpha_0 n$. The idea of the proof is the following: On the one hand, Proposition \ref{Ambainisreinterpretado} shows that for that order of $m$ the first $m$ rows/columns of two Haar distributed orthogonal matrices generate a nonlocal correlation with high probability. It also provides a gaussian matrix $A$ that certifies this nonlocality. On the other hand, the following theorem shows that the first $m$ columns of a gaussian matrix are ``close'', in an appropriate $\sup$-euclidean norm, to the first $m$ columns of a Haar distributed orthogonal matrix.
\begin{theorem}\cite[Theorem 1.1]{GPV}\label{Theorem Decoupling}
Let $n$ and $m$ be two natural numbers such that $\alpha=\frac{m}{n}\in (0,1)$. Let $G=(g_{i,j})_{i, j=1}^{n,n}$ be a random matrix whose entries are independent real standard gaussian variables. Let $U\in \mathcal O(n)$ be the random matrix obtained applying the Gram-Schmidt orthonormalization procedure to the columns of $G$.  Then $U$ is Haar distributed and
\begin{align}\label{Equation Decoupling}
\Pr \Big(\sup_{i=1,\dots, n}\big\|F_i^m(G-\sqrt{n}U)\big\|> (1+\epsilon)\theta(\alpha)\sqrt{m}\Big)\leq Kne^{C(\epsilon, \alpha)n}, 
\end{align}where $F_i^m(G-\sqrt{n}U)$ is the $i$-$th$ row of the matrix $G-\sqrt{n}U$ truncated to its first $m$ entries, $K$ is a universal positive constant, $C(\epsilon, \alpha)>0$ is a constant depending only on $\epsilon$ and $\alpha$ and
\begin{align*}
\theta(\alpha)&=\sqrt{ 2-\frac{4}{3} \frac{(1-(1 -\alpha)^{3/2})}{\alpha}}.
\end{align*}
\end{theorem}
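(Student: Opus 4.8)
The plan is to pass to the QR (Gram--Schmidt) decomposition of $G$ and to exploit the classical fact that, for a Gaussian matrix, the orthogonal factor is Haar distributed and independent of the triangular factor.

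First I would write $G=UR$, where $U$ is the orthogonal matrix obtained by applying Gram--Schmidt to the columns of $G$ (so $U$ is Haar distributed, which settles the first assertion of the theorem) and $R$ is upper triangular with positive diagonal. It is classical that $R$ has independent entries, with $R_{jj}=:r_j$ satisfying $r_j^2\sim\chi^2_{n-j+1}$ and $R_{jk}\sim N(0,1)$ for $j<k$, and that $U$ and $R$ are independent. Writing $A^{(m)}$ for the first $m$ columns of a matrix $A$ and $R^{(m)}$ for the leading $m\times m$ block of $R$, one has $G^{(m)}=U^{(m)}R^{(m)}$, so $G^{(m)}-\sqrt n\,U^{(m)}=U^{(m)}\big(R^{(m)}-\sqrt n\,I_m\big)$. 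Hence, with $v_i:=(U^{(m)})^{T}e_i\in\mathbb R^{m}$ and $M:=\big(R^{(m)}-\sqrt n I_m\big)\big(R^{(m)}-\sqrt n I_m\big)^{T}$, we obtain the key identity
\[
\big\|F_i^m(G-\sqrt n\,U)\big\|^2=v_i^{\,T}Mv_i=\|v_i\|^2\;\omega_i^{\,T}M\omega_i ,
\]
where $\omega_i:=v_i/\|v_i\|$.

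The point is now to decouple the ``direction'' $\omega_i$ from the ``radial'' matrix $M$. Since $U$ is Haar and independent of $R$, the vector $v_i$ is independent of $M$ and is distributed as the first $m$ coordinates of a uniformly random point of $S^{n-1}$; consequently $\|v_i\|$ is distributed as the norm of the projection of such a point onto an $m$-dimensional subspace, and --- conditionally on $\|v_i\|$, and independently of it and of $M$ --- $\omega_i$ is uniform on $S^{m-1}$. Proposition~\ref{concentraciondenormas} gives $\Pr\big(\|v_i\|>(1-\rho)^{-1}\sqrt{m/n}\big)\le e^{-\rho^2 m/4}$, while, for $M$ fixed, $\mathbb E_{\omega_i}\big[\omega_i^{\,T}M\omega_i\big]=\tfrac1m\tr M$ and $\omega\mapsto\omega^{\,T}M\omega$ is $2\|M\|_{\mathrm{op}}$-Lipschitz on $S^{m-1}$, so concentration of measure on the sphere yields
\[
\Pr\Big(\omega_i^{\,T}M\omega_i>(1+\epsilon_1)\tfrac1m\tr M\Big)\le 2\exp\!\Big(-\,c\,m\,\epsilon_1^2\,\big(\tfrac1m\tr M\big)^2\big/\|M\|_{\mathrm{op}}^2\Big).
\]
It then remains to control the $R$-measurable quantities $\tr M$ and $\|M\|_{\mathrm{op}}$. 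We have $\|M\|_{\mathrm{op}}\le\big(\|R^{(m)}\|_{\mathrm{op}}+\sqrt n\big)^2=\big(\|G^{(m)}\|_{\mathrm{op}}+\sqrt n\big)^2\le\big(\|G\|_{\mathrm{op}}+\sqrt n\big)^2\le C_0 n$ off an event of probability $e^{-cn}$, by the standard deviation bound for the largest singular value of a Gaussian matrix; and
\[
\tr M=\big\|R^{(m)}-\sqrt n I_m\big\|_F^2=\sum_{j=1}^m(r_j-\sqrt n)^2+\sum_{1\le j<k\le m}R_{jk}^2
\]
is a sum of independent sub-exponential terms whose expectation, by a Riemann-sum computation using $\mathbb E\,r_j^2=n-j+1$ and $\mathbb E\,r_j=\sqrt{n-j+1}\,(1+O(1/n))$, satisfies $\tfrac1{mn}\,\mathbb E\,\tr M\to 2-\tfrac{4}{3\alpha}\big(1-(1-\alpha)^{3/2}\big)=\theta(\alpha)^2$; Bernstein's inequality then gives $\Pr\big(\,\big|\tr M-\mathbb E\,\tr M\big|>\epsilon_2\,mn\,\big)\le e^{-cn}$. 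So there is an $R$-measurable event $\mathcal E$ with $\Pr(\mathcal E^c)\le e^{-cn}$ on which $\|M\|_{\mathrm{op}}\le C_0 n$ and $(1-\epsilon_2)\theta(\alpha)^2 n\le\tfrac1m\tr M\le(1+\epsilon_2)\theta(\alpha)^2 n$.

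Finally I would condition on $R$ with $\mathcal E$ in force. For each fixed $i$, the two bounds of the previous paragraph (the mutual dependence of the $v_i$ across $i$ being irrelevant here) fail with probability $\le e^{-cn}$, and on their intersection the key identity gives
\[
\big\|F_i^m(G-\sqrt n\,U)\big\|^2\le(1-\rho)^{-2}\tfrac mn\,(1+\epsilon_1)(1+\epsilon_2)\,\theta(\alpha)^2 n=(1-\rho)^{-2}(1+\epsilon_1)(1+\epsilon_2)\,\theta(\alpha)^2 m .
\]
Choosing $\rho,\epsilon_1,\epsilon_2$ with $(1-\rho)^{-2}(1+\epsilon_1)(1+\epsilon_2)\le(1+\epsilon)^2$, a union bound over $i=1,\dots,n$ together with $\Pr(\mathcal E^c)\le e^{-cn}$ yields a failure probability of the form $Kn\,e^{-C(\epsilon,\alpha)n}$, which is the desired estimate \eqref{Equation Decoupling}.

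The step I expect to be the main obstacle is the decoupling of the third paragraph: one must carefully justify the independence of $U$ and $R$ in the Gaussian QR decomposition --- this is exactly what lets the random direction $\omega_i$ be analysed against a frozen $M$ --- and one needs all the concentration estimates (for $\tr M$, for $\|M\|_{\mathrm{op}}$, and for the spherical average $\omega_i^{\,T}M\omega_i$) to be \emph{exponentially} good in $n$, so that the union bound over the $n$ rows still leaves probability $1-Kn\,e^{-C(\epsilon,\alpha)n}$. The constant $\theta(\alpha)$ is not mysterious: it emerges only after matching $\tfrac1{mn}\sum_{j\le m}(r_j-\sqrt n)^2$ with $\tfrac1\alpha\int_0^\alpha(1-\sqrt{1-t})^2\,dt$ and adding the contribution $\tfrac1{mn}\binom m2\to\alpha/2$ of the off-diagonal entries of $R$.
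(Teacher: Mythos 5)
This theorem is stated in the paper as a citation of \cite[Theorem 1.1]{GPV}; the paper itself gives no proof, so there is no internal proof to compare against. Judged on its own merits, your argument is correct and, as far as I can tell, follows the same route as the cited reference: the QR (Bartlett) decomposition $G=UR$ with $U$ Haar, $R$ upper-triangular with independent entries $r_j^2\sim\chi^2_{n-j+1}$, $R_{jk}\sim N(0,1)$ for $j<k$, and $U\perp R$, followed by the reduction $F_i^m(G-\sqrt n\,U)=v_i^T(R^{(m)}-\sqrt n I_m)$ with $v_i$ the $i$-th row of $U^{(m)}$, and then separate concentration for $\|v_i\|$ (Proposition~\ref{concentraciondenormas}), for the direction $\omega_i$ (L\'evy concentration on $S^{m-1}$, with Lipschitz constant $2\|M\|_{\mathrm{op}}$), and for the $R$-measurable quantities $\tr M$ and $\|M\|_{\mathrm{op}}$. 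The independence of $U$ and $R$ is the classical Bartlett fact, not a genuine obstacle. All three concentration bounds are indeed exponential in $n$ in the regime $m=\alpha n$ (the one for $\omega_i^T M\omega_i$ because $(\tr M/m)^2/\|M\|_{\mathrm{op}}^2$ is bounded below by a constant on $\mathcal E$, the one for $\tr M$ because the diagonal terms are sub-Gaussian with parameter $O(\sqrt n)$ each and the off-diagonal sum is a $\chi^2_{\binom m2}$), so the union bound over $i$ gives a failure probability of the required form. Your Riemann-sum evaluation of $\tfrac1{mn}\mathbb E\,\tr M$ also checks out: $\tfrac1\alpha\int_0^\alpha(1-\sqrt{1-t})^2\,dt+\tfrac\alpha2=2-\tfrac{4}{3\alpha}\big(1-(1-\alpha)^{3/2}\big)=\theta(\alpha)^2$. (Incidentally, the exponent in \eqref{Equation Decoupling} should read $e^{-C(\epsilon,\alpha)n}$; as written, with $C>0$, the bound is vacuous --- a sign typo in the paper, and your derivation correctly produces the negative sign.)
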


Finally, Grothendieck's inequality allows us to translate this $\sup$-euclidean closeness into a big value of the correlation $\gamma$ when tested against the witness $A$.

\begin{theorem}\label{principal}
Let $G=(g_{i,j})_{i, j=1}^{n,m}$ and $H=(h_{i,j})_{i, j=1}^{n,m}$ be two random matrices whose entries are independent real standard gaussian variables satisfying $\alpha=\frac{m}{n}\in (0,1)$. For every $i,j=1,\dots , n$, let $g_i=(g_{i,k})_{k=1}^m$ and $h_j=(h_{j,k})_{k=1}^m$ be the row vectors  of $G$ and $H$ respectively . Let us denote  $\bar{g}_i=\frac{g_i}{\|g_i\|}$ and $ \bar{h}_j=\frac{h_j}{\|h_j\|}$. Then, if $\alpha\leq \alpha_0\approx 0.004$, the quantum correlation matrix given by $\gamma=(\langle \bar{g}_i,\bar{h}_j\rangle)_{i,j=1}^n$is not local with probability $1-o(1)$.
\end{theorem}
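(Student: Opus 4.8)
The plan is to combine the three ingredients assembled in the excerpt: Proposition \ref{Ambainisreinterpretado} (which produces, with probability $1-o(1)$, a gaussian witness matrix $A$ such that the \emph{orthogonal} correlation $\widetilde\gamma_{i,j}=\langle \tfrac{\sqrt n}{\sqrt m}u_i,\tfrac{\sqrt n}{\sqrt m}v_j\rangle$ satisfies $\sum_{i,j}a_{i,j}\widetilde\gamma_{i,j}\ge(\delta-o(1))n^{3/2}$ while $\omega(A)\le 1.6651\ldots n^{3/2}$), Theorem \ref{Theorem Decoupling} (the decoupling estimate comparing the first $m$ columns of a gaussian matrix with $\sqrt n$ times the Gram--Schmidt orthogonalization), and Grothendieck's inequality (Theorem \ref{Grothendieck}) to control the error term. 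First I would invoke Remark \ref{muestreo}/Proposition \ref{Ambainisreinterpretado} to fix a probability space carrying gaussian matrices $G,H$ whose Gram--Schmidt orthogonalizations are Haar-distributed orthogonal matrices $\sqrt n U$, $\sqrt n V$, and take $A$ to be the witness for the orthogonal correlation built from the first $m$ columns of $U$ and $V$ (using that $U,V$ can be coupled to the singular-vector matrices of $A$). The goal is to show that this same $A$ still witnesses nonlocality of $\gamma=(\langle\bar g_i,\bar h_j\rangle)$.

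The core computation is to bound the difference $\sum_{i,j}a_{i,j}(\gamma_{i,j}-\widetilde\gamma_{i,j})$. I would first pass from the normalized vectors $\bar g_i=g_i/\|g_i\|$ to the unnormalized-and-rescaled vectors $g_i/\sqrt n$: by Proposition \ref{concentracion}/\ref{concentraciondenormas} one has $\sup_i|1-\|g_i\|/\sqrt n|\le\epsilon$ off an exponentially small event, so $\|\bar g_i-g_i/\sqrt n\|\le\epsilon$ uniformly, and since $\|\bar h_j\|=1$ the resulting change in each entry $\langle\bar g_i,\bar h_j\rangle$ is $O(\epsilon)$; a symmetric argument handles $h_j$. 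Testing such a uniformly-$O(\epsilon)$ perturbation of the correlation matrix against $A$ costs at most $\|A\|_{\ell_1^n\otimes_\epsilon\ell_1^n}\cdot O(\epsilon)=\omega(A)\,O(\epsilon)=O(\epsilon)n^{3/2}$ — this is precisely where Grothendieck's inequality enters, bounding $\sup\{|\sum a_{i,j}\langle x_i,y_j\rangle|:\|x_i\|,\|y_j\|\le 1\}$ by $K_G\,\omega(A)$, so that a perturbation of the $x_i$'s of size $\epsilon$ changes the tested value by at most $2\epsilon K_G\,\omega(A)$. Next I would compare $g_i/\sqrt n$ with $u_i$ (the first $m$ coordinates of the $i$-th row of $U$): Theorem \ref{Theorem Decoupling} gives $\sup_i\|F_i^m(G-\sqrt n U)\|\le(1+\epsilon)\theta(\alpha)\sqrt m$ off an event of probability $Kne^{C(\epsilon,\alpha)n}=o(1)$, i.e. $\|g_i/\sqrt n-u_i\|\le(1+\epsilon)\theta(\alpha)\sqrt{\alpha}$ uniformly. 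Crucially $\tfrac{\sqrt n}{\sqrt m}u_i$ has norm $\le 1$ (up to the same concentration corrections, since $u_i$ is the projection of a Haar vector onto $m$ coordinates, handled by Proposition \ref{concentraciondenormas}), so again Grothendieck bounds the cost of this replacement by $K_G\,\omega(A)$ times the uniform vector-distance $(1+\epsilon)\theta(\alpha)\sqrt\alpha/\sqrt\alpha$ — after the $\sqrt n/\sqrt m$ rescaling the relevant distance is $(1+\epsilon)\theta(\alpha)/\sqrt{\alpha}\cdot\sqrt\alpha$... more carefully, $\|\tfrac{\sqrt n}{\sqrt m}g_i/\sqrt n-\tfrac{\sqrt n}{\sqrt m}u_i\|=\tfrac1{\sqrt\alpha}\|g_i/\sqrt n-u_i\|\le(1+\epsilon)\theta(\alpha)$, so the change in $\sum a_{i,j}(\cdot)$ is at most a constant times $(1+\epsilon)\theta(\alpha)\,\omega(A)=O(\theta(\alpha))n^{3/2}$.

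Putting the estimates together, off an $o(1)$ event we get
\begin{align*}
\sum_{i,j=1}^n a_{i,j}\gamma_{i,j}\;\ge\;\big(\delta-C\theta(\alpha)-C'\epsilon-o(1)\big)n^{3/2},
\end{align*}
for absolute constants $C,C'$ (coming from $K_G$ and $\omega(A)\le 1.6651\ldots n^{3/2}$), while $\omega(A)\le 1.6651\ldots n^{3/2}$. Hence if $\alpha\le\alpha_0$ is small enough that $\delta=f^{-1}(\alpha)$ — which tends to its maximal value as $\alpha\to 0$ by the Marcenko--Pastur law, cf. Theorem \ref{Marcenko-Pastur}, $f(0)=\tfrac1{2\pi}\int_0^4\sqrt{4/x-1}\,dx$ — exceeds $1.6651\ldots+C\theta(\alpha)$ by a fixed margin (picking $\epsilon$ correspondingly small), we obtain $\sum_{i,j}a_{i,j}\gamma_{i,j}>\omega(A)$ with probability $1-o(1)$, so by \eqref{local-injective} $\gamma\notin\mathcal L$, i.e. $\gamma$ is nonlocal with probability $1-o(1)$. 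The numerical value $\alpha_0\approx 0.004$ then comes from solving the resulting inequality $\delta(\alpha)>1.6651\ldots+(\text{const})\cdot\theta(\alpha)$ for the threshold, using that $\delta(\alpha)\to\delta(0)=2$ and $\theta(\alpha)\to 0$ as $\alpha\to 0$.

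The main obstacle I expect is bookkeeping the two different sources of error uniformly in the right norm: one must be careful that \emph{every} replacement (normalized $\to$ rescaled gaussian, rescaled gaussian $\to$ rescaled orthogonal row, on both the $g$-side and the $h$-side) is a perturbation of the Hilbert-space vectors that is small in the supremum-Euclidean sense and leaves the vectors in (a fixed dilate of) the unit ball, so that Grothendieck's inequality can be applied to convert "small vector perturbation" into "$O(\cdot)\,\omega(A)$ perturbation of the tested value". A secondary technical point is the coupling: Proposition \ref{Ambainisreinterpretado} builds $A$ from Haar-orthogonal $U,V$ via Remark \ref{muestreo}, whereas Theorem \ref{Theorem Decoupling} produces the Haar-orthogonal matrix as the Gram--Schmidt orthogonalization of a \emph{given} gaussian matrix; one must set up a single probability space on which $A$ is gaussian, $U,V$ are its (Haar-distributed) singular-vector matrices, and simultaneously $\sqrt n U$, $\sqrt n V$ are close in the decoupling norm to independent gaussian matrices $G$, $H$ — this is legitimate because a gaussian matrix's singular-vector matrices are independent Haar orthogonals (Proposition \ref{SVD}), but it requires stating the coupling carefully before the estimates can be chained.
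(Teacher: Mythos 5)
Your proposal follows the paper's proof essentially exactly: fix a common probability space where $G,H$ are the first $m$ columns of independent $n\times n$ gaussians, apply Theorem \ref{Theorem Decoupling} to get coupled Haar-orthogonals $U,V$, invoke Proposition \ref{Ambainisreinterpretado} (via Remark \ref{muestreo}) to build the gaussian witness $A$ from $U,V$, then control the three error cross-terms uniformly with Grothendieck's inequality and the concentration bounds. The sequential-replacement bookkeeping you describe is exactly the single decomposition $\bar g_i=\frac{\sqrt n}{\sqrt m}u_i+\varepsilon_i$, $\bar h_j=\frac{\sqrt n}{\sqrt m}v_j+\sigma_j$ used in the paper, and your final inequality $\delta(\alpha)>1.6651\ldots\bigl(1+(\text{const})K_G\theta(\alpha)\bigr)$ and limiting behaviour $\delta(\alpha)\to 2$, $\theta(\alpha)\to 0$ as $\alpha\to 0$ are all correct.

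One slip worth fixing: in your first replacement step you write ``pass from $\bar g_i$ to $g_i/\sqrt n$, since $\sup_i|1-\|g_i\|/\sqrt n|\le\epsilon$''. This is false: $g_i\in\mathbb R^m$, so $\|g_i\|\approx\sqrt m$ and $\|g_i\|/\sqrt n\approx\sqrt\alpha$, which for small $\alpha$ is nowhere near $1$. The correct normalization (and the one Proposition \ref{concentracion} actually gives you, via Equation \eqref{normalapprox}) is $\|\bar g_i-g_i/\sqrt m\|\le\epsilon$. Your later ``more carefully'' computation, $\bigl\|\tfrac{g_i}{\sqrt m}-\tfrac{\sqrt n\,u_i}{\sqrt m}\bigr\|=\tfrac1{\sqrt\alpha}\|g_i/\sqrt n-u_i\|\le(1+\epsilon)\theta(\alpha)$, is right and makes the earlier $\sqrt n$ superfluous; just replace the first-step target vector by $g_i/\sqrt m$ and the argument is clean and identical to the paper's.
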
 
As we will explain below, it suffices to show the result for $\alpha=\alpha_0$.

\begin{proof} We consider two independent $n \times n$ gaussian matrices $G'$ and $H'$. We take $G$ and $H$ from the statement as the first $m$ columns of $G'$ and $H'$ respectively. \footnote{Note that the last $n-m$ columns of $G'$ and $H'$ will not play any role in the proof. They are only introduced in order to apply Theorem \ref{Theorem Decoupling} in a simple way.} We can apply Theorem \ref{Theorem Decoupling} to $G'$ (respectively $H'$) to obtain a Haar distributed random matrix $U$ (respectively $V$) such that Equation (\ref{Equation Decoupling}) is fulfilled. Moreover, $U$ and $V$ are independent from each other as so are $G'$ and $H'$. We define $u_i=(u_{i,k})_{k=1}^m$ and $v_j=(v_{j,k})_{k=1}^m$ as the vectors formed by the first $m$ entries of the row vectors of $U$ and $V$ respectively.

It follows from Proposition \ref{Ambainisreinterpretado} that, with probability $1-o(1)$, there exists a gaussian matrix $A$ such that $U, V$ are formed by the left and right singular vectors of $A$, arranged in decreasing order and such that  
\begin{equation}\label{cotaclasica}
\omega(A)
\leq 1.6651\dots n^\frac{3}{2}
\end{equation}
and 
\begin{align*}\sum_{i,j=1}^n a_{i,j} \Big\langle \frac{\sqrt{n}u_i}{\sqrt{m}},\frac{\sqrt{n}v_j}{\sqrt{m}} \Big\rangle \geq (\delta-o(1)) n^{\frac{3}{2}},
\end{align*}
where $\delta=f^{-1}(\alpha)$ as in Proposition \ref{Ambainisreinterpretado}.

We need to see now that $\sum_{i,j=1}^n a_{i,j} \gamma_{i,j}$ is greater than $1.6651\dots n^\frac{3}{2}$. We write
\begin{align*}
\bar{g}_i=\frac{\sqrt{n}u_i}{\sqrt{m}}+ \frac{g_i}{\sqrt{m}}-\frac{\sqrt{n}u_i}{\sqrt{m}}+\bar{g}_i -\frac{g_i}{\sqrt{m}}:= \frac{\sqrt{n}u_i}{\sqrt{m}} + \varepsilon_i,
\end{align*}
\begin{align*}
\bar{h}_j=\frac{\sqrt{n}v_j}{\sqrt{m}}+ \frac{h_j}{\sqrt{m}}-\frac{\sqrt{n}v_j}{\sqrt{m}}+\bar{h}_j -\frac{h_j}{\sqrt{m}}:= \frac{\sqrt{n}v_j}{\sqrt{m}} + \sigma_j.
\end{align*}Therefore, $\Big|\sum_{i,j=1}^n a_{i,j} \gamma_{i,j}\Big|=\Big|\sum_{i,j=1}^n a_{i,j} \langle \bar{g}_i,\bar{h}_j\rangle\Big|$ is lower bounded by 
\begin{align}\label{comienzo}
\Big|\sum_{i,j=1}^n a_{i,j} \Big\langle \frac{\sqrt{n}u_i}{\sqrt{m}},\frac{\sqrt{n}v_j}{\sqrt{m}} \Big\rangle\Big|- \Big|\sum_{i,j=1}^n a_{i,j} \Big\langle \frac{\sqrt{n}u_i}{\sqrt{m}},\sigma_j\Big\rangle\Big|\\\nonumber-\Big|\sum_{i,j=1}^n a_{i,j} \Big\langle \varepsilon_i ,\frac{\sqrt{n}v_j}{\sqrt{m}}\Big\rangle\Big|-\Big|\sum_{i,j=1}^n a_{i,j} \big\langle \varepsilon_i,\sigma_j\big\rangle\Big|.
\end{align}

As we mentioned before, with probability $1-o(1)$ we have 
\begin{align*}\sum_{i,j=1}^n a_{i,j} \Big\langle \frac{\sqrt{n}u_i}{\sqrt{m}},\frac{\sqrt{n}v_j}{\sqrt{m}} \Big\rangle \geq (\delta-o(1)) n^{\frac{3}{2}}.
\end{align*}

We need now to upper bound the other three summands in (\ref{comienzo}). We will do this by means of Theorem \ref{Grothendieck}. First, we need to bound the norm of the vectors $\varepsilon_i, \sigma_j$: We do this for the $\varepsilon_i$'s, since the $\sigma_j$'s are totally analogous.

We note that $$\|\varepsilon_i\|\leq\Big\|\frac{g_i}{\sqrt{m}}-\frac{\sqrt{n}u_i}{\sqrt{m}}\Big\|+\Big\|\bar{g}_i -\frac{g_i}{\sqrt{m}}\Big\|.$$
The second term of this sum can be made arbitrarily small by means of Equation (\ref{normalapprox}).
Moreover, according to Theorem \ref{Theorem Decoupling} we have
\begin{align*}
\Pr \Big(\sup_{i=1,\dots, n}\Big\|\frac{g_i}{\sqrt{m}}-\frac{\sqrt{n}u_i}{\sqrt{m}}\Big\|> (1+\epsilon)\theta(\alpha)\Big)\leq Kne^{C(\epsilon, \alpha)n}.
\end{align*}Thus, for a given $\epsilon>0$ we have that
\begin{align*}
\Pr \Big(\sup_{i=1,\dots, n}\big\|\epsilon_i\big\|> \epsilon+\theta(\alpha)\Big)\leq K'ne^{C'(\epsilon, \alpha)n}.
\end{align*}Also, according to Proposition \ref{concentraciondenormas}  
\begin{align*}
\Pr \Big(\sup_{j=1,\dots, n}\frac{\sqrt{n}}{\sqrt{m}}\|u_j\|> \frac 1 {1-\epsilon}\Big)\leq ne^{-\frac{\epsilon^2m}{4}}.
\end{align*}We have used here that each of the $u_j$'s is the projection onto the first $m$-coordinates of a unit vector distributed according to the Haar measure in $\mathbb R^n$, and we have also used a union bound argument to consider the supremum on $j$. Then, we can invoke Theorem \ref{Grothendieck} to state that for a fixed $\epsilon>0$ we have
\begin{align*}
\Big|\sum_{i,j=1}^n a_{i,j} \Big\langle \varepsilon_i ,\frac{\sqrt{n}v_j}{\sqrt{m}}\Big\rangle\Big|\leq \big(\epsilon+\theta(\alpha)\big)\frac 1 {1-\epsilon}K_G\omega(A)
\end{align*}with probability larger than $1-K''ne^{C'(\epsilon, \alpha)n}$. By following completely analogous arguments for the rest of terms in (\ref{comienzo}) we finally get that 
\begin{align*}
\Big|\sum_{i,j=1}^n a_{i,j} \gamma_{i,j}\Big|\geq (\delta-o(1)) n^{\frac{3}{2}}-\Big(2\big(\epsilon+\theta(\alpha)\big)\frac 1 {1-\epsilon}+\big(\epsilon+\theta(\alpha)\big)^2\Big)K_G\omega(A)
\end{align*}
with probability larger than $1-K_1ne^{-K_2(\epsilon, \alpha)m}$, where $K_1$ is a universal positive constant and $K_2(\epsilon, \alpha)$ is a positive constant depending only on $\epsilon$ and $\alpha$. Then, in order to have a Bell violation and using that $\epsilon$ can be made arbitrarily small, it suffices to impose that 
\begin{align*}
(\delta-o(1)) n^{\frac{3}{2}}>\omega(A)+\big(2\theta(\alpha)+\theta(\alpha)^2\big)K_G\omega(A). 
\end{align*}
It follows from (\ref{cotaclasica}), the relation between $\alpha$ and $\delta$ described in Proposition \ref{Ambainisreinterpretado} and Theorem \ref{Theorem Decoupling} that for $\alpha_0=0.00404$ the previous inequality is verified.
\end{proof}
It is very easy to show that for any fixed finite $m$, the probability that a quantum correlation matrix sampled according to our procedure is nonlocal tends to one as $n$ tends to infinity. We write the proof for the case $m=2$, but the reasoning extends trivially to finite $m$. 

It is well known and easy to check that in the case $n=2$ the element $A=\left(\begin{smallmatrix}1 & 1 \\ 1 & -1\end{smallmatrix}\right)$ verifies $\omega(A)=2$. The fact that $\omega(A)\leq 2$ is usually called the CHSH-inequality. In addition, if we define $u_1=(1,0)$, $u_2=(0,1)$, $v_1=\frac{1}{\sqrt{2}}(1,1)$, $v_2=\frac{1}{\sqrt{2}}(1,-1)$ we have $$\sum_{i,j=1}^2a_{i,j}\langle u_i, v_j\rangle=2\sqrt{2}.$$ Therefore, the quantum correlation matrix given by $\gamma=(\langle u_i, v_j\rangle)_{i,j=1}^2$ is nonlocal. Since the function $$f(u_1,u_2,v_1,v_2)=\sum_{i,j=1}^2a_{i,j}\langle u_i, v_j\rangle$$ is continuous on the cartesian product of unit spheres in $\mathbb R^n$, $S:=S^1\times S^1\times S^1\times S^1$, we can easily conclude the existence of an open subset $B$ of $S$ such that $f_{|B}>2$. By considering a subset of $B$, we can assume that this set has positive Haar measure. Hence, the probability that a quantum correlation matrix sampled according to our procedure with $m=n=2$  is nonlocal is strictly larger than zero. Therefore, if we consider the same sampling as above with $n$ large, we can consider independent $2\times 2$-blocks of $\gamma=(\langle u_i, v_j\rangle)_{i,j=1}^n$ and check the probability that at least one of these blocks is non local. This probability will tend to one as $n$ tends to infinity. 

With the same ideas one can prove that Theorem \ref{principal} remains true if we let  $m<\alpha_0n$. In particular we can also cover the case where $m$ grows sublinearly with $n$.  Call $m_n$ to the dimension we will consider in the case $n$. If $m_n$ stays bounded as $n$ grows to infinity, the reasonings from the previous paragraph apply. Otherwise, for every $n$ we consider only matrices/Bell inequalities $A$ which involve the first $\frac{m_n}{\alpha_0}$ vectors, and apply Theorem \ref{principal}.

To finish this section we will show that sampling normalized vectors $u_i$ and $v_j$ whose entries are independent Bernoulli variables leads to local correlations with probability one. Indeed, if we consider such vectors $u_i=\frac{1}{\sqrt{m}}(\epsilon_1^i,\dots, \epsilon_m^i)$, $v_j=\frac{1}{\sqrt{m}}(\delta_1^j,\dots, \delta_m^j)$, we obtain that $$(\gamma_{i,j})_{i,j=1}^n=\Big(\frac{1}{m}\sum_{k=1}^m\epsilon_k^i\delta_k^j\Big)_{i,j=1}^n.$$However, for a fixed $k$, we have that $(\gamma^k_{i,j})_{i,j=1}^n=\big(\epsilon_k^i\delta_k^j\big)_{i,j=1}^n$ is a deterministic (so local) correlation. Since $(\gamma_{i,j})_{i,j=1}^n$ is written as a convex combination of these objects, we immediately conclude that $(\gamma_{i,j})_{i,j=1}^n$ is a local correlation.
\section{An upper bound for $\alpha_0$: Part b) of Theorem \ref{Main theorem}}\label{Section: upper bound}
\begin{theorem}\label{principal II}
Let $G=(g_{i,j})_{i, j=1}^{n,m}$ and $H=(h_{i,j})_{i, j=1}^{n,m}$ be two random matrices whose entries are independent real standard gaussian variables and let $\alpha=\frac{m}{n}$. For every $i,j=1,\dots , n$, let $g_i=(g_{i,k})_{k=1}^m$ and $h_j=(h_{j,k})_{k=1}^m$ be the 
row vectors  of $G$ and $H$ respectively . Let us denote  $\bar{g}_i=\frac{g_i}{\|g_i\|}$ and $ \bar{h}_j=\frac{h_j}{\|h_j\|}$. Then, if $\alpha>2$, the quantum correlation matrix given by $\gamma=(\langle \bar{g}_i ,\bar{h}_j\rangle)_{i,j=1}^n$ is local with probability larger than $1-2n^2e^{-C(\alpha)n}$. Here, $C(\alpha)\in (0,1)$ is a constant depending only on $\alpha$.
\end{theorem}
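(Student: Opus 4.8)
The plan is to turn ``$\gamma$ is local'' into a uniform estimate on a projective tensor norm, and then to exploit the fact that when $\alpha>2$ the vectors $\bar h_j$ span only an $n$-dimensional subspace of the $m$-dimensional ambient space, so that the $\bar g_i$'s project into it as very short vectors; Grothendieck's inequality then does the rest.

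\textbf{Step 1 (duality reduction).} By \eqref{local-projective} it suffices to show that $\|\gamma\|_{\ell_\infty^n\otimes_\pi\ell_\infty^n}\le 1$ with the stated probability, and by the duality \eqref{duality epsilon-pi} this amounts to proving that, simultaneously for every $n\times n$ matrix $A$ with $\omega(A)=\|A\|_{\ell_1^n\otimes_\epsilon\ell_1^n}\le 1$, one has $\bigl|\sum_{i,j}A_{i,j}\gamma_{i,j}\bigr|\le 1$.

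\textbf{Step 2 (short vectors via a projection).} Let $S_H=\operatorname{span}(h_1,\dots,h_n)\subseteq\mathbb R^m$, which is almost surely $n$-dimensional, and let $P_{S_H}$ be the orthogonal projection onto it. Since $\bar h_j\in S_H$ we may write $\gamma_{i,j}=\langle\bar g_i,\bar h_j\rangle=\langle x_i,y_j\rangle$ with $x_i:=P_{S_H}\bar g_i$ and $y_j:=\bar h_j$ a unit vector. Now apply Proposition \ref{concentraciondenormas} to the projection of a Haar-distributed unit vector of $\mathbb R^m$ onto the fixed $n$-dimensional subspace $S_H$ (this is legitimate exactly as in the remark after that proposition, by rotational invariance of the Haar measure, with the two dimensions interchanged): for every $0<\rho<1$, $\Pr\bigl(\|P_{S_H}\bar g_i\|\ge \tfrac1{1-\rho}\sqrt{n/m}\bigr)\le e^{-\rho^2 n/4}$. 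A union bound over $i=1,\dots,n$ (and the symmetric statement for the $\bar h_j$'s projected onto $S_G=\operatorname{span}(g_1,\dots,g_n)$, which I keep in reserve, plus the elementary control of the norms $\|g_i\|,\|h_j\|$ from Propositions \ref{Chernoff}/\ref{concentracion}) shows that, outside an event of probability at most $2n^2e^{-C(\alpha)n}$ with $C(\alpha)=\rho(\alpha)^2/4\in(0,1)$, all the vectors $x_i$ satisfy $\|x_i\|\le \tfrac1{1-\rho}\,\alpha^{-1/2}$.

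\textbf{Step 3 (Grothendieck).} On that event, applying Theorem \ref{Grothendieck} to the representation $\gamma_{i,j}=\langle x_i,y_j\rangle$ gives, for every $A$ with $\omega(A)\le 1$,
\[
\Bigl|\sum_{i,j}A_{i,j}\gamma_{i,j}\Bigr|\ \le\ K_G\,\bigl(\max_i\|x_i\|\bigr)\bigl(\max_j\|y_j\|\bigr)\,\omega(A)\ \le\ \frac{K_G}{(1-\rho)\sqrt{\alpha}}\,,
\]
which is already $<1$ as soon as $\alpha$ is somewhat larger than $K_G^2$, once $\rho=\rho(\alpha)$ is chosen small enough; combined with Step 1 this proves locality in that regime with the announced probability. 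To bring the threshold down to $\alpha>2$ one squeezes the constant: instead of bounding $\max_i\|x_i\|\cdot\max_j\|y_j\|$ by $\alpha^{-1/2}\cdot 1$, one uses the essentially optimal representation of $\gamma$ by short vectors, for which the relevant input is the precise geometry of the configuration (a Marcenko--Pastur/Jacobi-type statement: when $\alpha>2$ the two independent uniformly random $n$-dimensional subspaces $S_G,S_H$ of $\mathbb R^m$ are uniformly transverse, equivalently the Gram matrix of the $2n$ vectors $\bar g_1,\dots,\bar g_n,\bar h_1,\dots,\bar h_n$ is bounded below). This can be combined with the observation that $\|\tfrac2\pi\arcsin\gamma\|_{\ell_\infty^n\otimes_\pi\ell_\infty^n}\le 1$ (hyperplane rounding / the Grothendieck identity) and that all higher Hadamard powers of $\gamma$ have projective norm $o(1)$, since $\max_{i,j}|\gamma_{i,j}|\to 0$ for $\alpha$ fixed.

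\textbf{Main obstacle.} The structural part (Steps 1--2, the union bound, the final application of Grothendieck) is routine; the real difficulty is Step 3, namely obtaining the \emph{sharp} constant. The naive ``short vectors $+$ Grothendieck'' argument only yields the weaker bound $\alpha>K_G^2$, so getting $\alpha>2$ requires the careful random-matrix analysis of the transversality of $S_G$ and $S_H$ (respectively of the least singular value of the full $2n\times m$ Gaussian block), and even then the remaining gap to the conjectured optimal $\alpha>K_G$ is precisely the unavoidable loss in Grothendieck's inequality.
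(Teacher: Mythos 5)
Your approach has a genuine gap, and you diagnose it correctly yourself: the ``short vectors plus Grothendieck'' route in Steps~2--3 only gives locality for $\alpha>K_G^2$, which lies somewhere in $(2.81,\,3.18)$, strictly worse than the claimed threshold $\alpha>2$. The speculation at the end of Step~3 about transversality of $S_G,S_H$, the Grothendieck identity $\|\tfrac2\pi\arcsin\gamma\|_{\ell_\infty^n\otimes_\pi\ell_\infty^n}\le 1$, and the smallness of higher Hadamard powers is not carried out, and as written it does not close the gap: the identity yields $\|\gamma\|_{\ell_\infty^n\otimes_\pi\ell_\infty^n}\le \tfrac\pi 2+o(1)$, which is still larger than $1$, so some further renormalization or a different inequality would be needed, and none is supplied. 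So the proposal, in its present form, proves a weaker statement than Theorem~\ref{principal II}.

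The paper avoids Grothendieck's inequality entirely in this direction. The key is the elementary factorization estimate (Lemma~\ref{theorem 1-summing})
\[
\|M\|_{\ell_\infty^n\otimes_\pi\ell_\infty^n}\ \le\ \sqrt{2}\,\|M\|_{\ell_\infty^n(\ell_2^n)},
\]
where $\|M\|_{\ell_\infty^n(\ell_2^n)}=\max_i\bigl(\sum_j M_{i,j}^2\bigr)^{1/2}$; this is Khintchine's inequality in dual form (equivalently, $\pi_1(\mathrm{id}:\ell_1^n\to\ell_2^n)\le\sqrt 2$), and it replaces the constant $K_G$ by $\sqrt 2$. One then estimates $\|\gamma\|_{\ell_\infty^n(\ell_2^n)}$ directly: conditioning on $\bar g_i$, the numbers $\langle\bar g_i,h_j\rangle$ form an i.i.d.\ standard gaussian vector of length $n$, so $\bigl(\sum_j\langle\bar g_i,\bar h_j\rangle^2\bigr)^{1/2}$ concentrates around $\sqrt{n/m}=1/\sqrt\alpha$; by Proposition~\ref{concentracion} and a union bound this holds for all $i$ simultaneously outside an event of probability at most $2n^2e^{-C(\alpha)n}$. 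Since $1/\sqrt\alpha<1/\sqrt 2$ exactly when $\alpha>2$, one gets $\|\gamma\|_{\ell_\infty^n\otimes_\pi\ell_\infty^n}<1$ with the claimed probability. Your Step~2, via $x_i=P_{S_H}\bar g_i$, is essentially the same probabilistic input (both amount to $\|x_i\|\approx 1/\sqrt\alpha$), so the only missing ingredient is the Khintchine-based lemma in place of Grothendieck: that single substitution improves the threshold from $K_G^2$ to $2$ and closes the gap. The remaining distance from $2$ down to $K_G$ is, as the authors note, open and not addressed by either argument.
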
 
For the proof of Theorem \ref{principal II} it will be convenient to introduce two new norms. We will denote by $\ell_\infty^n(\ell_2^n)$ the space of $n\times n$ real matrices $M$ endowed with the norm $$\|M\|_{\ell_\infty^n(\ell_2^n)}=\max_{i=1,\dots, n}\Big(\sum_{j=1}^n|M_{i,j}|^2\Big)^{\frac{1}{2}}$$ and $\ell_1^n(\ell_2^n)$ the same space endowed with the norm 
$$\|M\|_{\ell_1^n(\ell_2^n)}=\sum_{i=1}^n\Big(\sum_{j=1}^n|M_{i,j}|^2\Big)^{\frac{1}{2}}.$$ 
An easy application of Cauchy-Schwarz inequality shows that these two norms are dual to each other in the same way as in (\ref{duality epsilon-pi}).

We will also use Khintchine inequality (see for instance \cite[Section 8.5]{DeFl}) that we state next for the particular case we need.
\begin{theorem}[Khintchine inequality]
Let $(\epsilon_i)_{i=1}^n$ be a family of independent Bernoulli variables. Then, for every real numbers $(a_i)_{i=1}^n$ we have that $$\Big (\sum_{i=1}^n|a_i|^2\Big)^{\frac{1}{2}}\leq \sqrt{2}\int_\Omega\Big|\sum_{i=1}^na_i\epsilon_i(\omega)\Big|d\omega.$$
\end{theorem}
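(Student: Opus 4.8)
The plan is to establish the nontrivial half of the statement, namely the lower bound $\int_\Omega|S|\,d\omega\ge\frac1{\sqrt2}\big(\sum_{i=1}^na_i^2\big)^{1/2}$ for the Rademacher sum $S=\sum_{i=1}^na_i\epsilon_i$; the companion estimate $\int_\Omega|S|\,d\omega\le\big(\sum_ia_i^2\big)^{1/2}$ follows at once from the Cauchy--Schwarz inequality together with the orthonormality of the $\epsilon_i$ in $L_2(\Omega)$, but it is not needed here. By homogeneity I would normalize $\sum_ia_i^2=1$, so that the goal becomes $\int_\Omega|S|\,d\omega\ge\frac1{\sqrt2}$.

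First I would dispose of coefficient vectors with a dominant entry. Singling out an index, say $i=1$, conditioning on $(\epsilon_i)_{i\ge2}$ and writing $S'=\sum_{i\ge2}a_i\epsilon_i$, the elementary identity $|x+y|+|x-y|=2\max(|x|,|y|)$ gives
$$\int_\Omega|S|\,d\omega=\tfrac12\,\mathbb{E}\big(|a_1+S'|+|a_1-S'|\big)=\mathbb{E}\,\max(|a_1|,|S'|)\ge|a_1|,$$
so by symmetry $\int_\Omega|S|\,d\omega\ge\max_i|a_i|$. Hence it suffices to treat vectors with $|a_i|<\frac1{\sqrt2}$ for every $i$; in particular $n\ge2$.

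For the core I would follow the classical analysis of the extremal configuration. Since $a\mapsto\int_\Omega\big|\sum_ia_i\epsilon_i\big|\,d\omega$ is continuous on the compact unit sphere of $\ell_2^n$, a minimizer exists, and the task is to identify it. The tool is a two--coordinate perturbation: fixing indices $i\ne j$, conditioning on the remaining $\epsilon_k$'s with partial sum $T=\sum_{k\ne i,j}a_k\epsilon_k$, and writing $a_i=r\cos\phi$, $a_j=r\sin\phi$ with $r^2=a_i^2+a_j^2$ held fixed, one examines
$$\phi\longmapsto\mathbb{E}_T\,\mathbb{E}_{\epsilon_i,\epsilon_j}\big|\,r\cos\phi\,\epsilon_i+r\sin\phi\,\epsilon_j+T\,\big|,$$
whose inner expectation is the explicit piecewise--linear function $\tfrac14\sum_{\pm,\pm}|T\pm r\cos\phi\pm r\sin\phi|$ of $\phi$. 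Tracking where this is minimized, one reduces any minimizer, after the harmless sign changes $\epsilon_i\mapsto\pm\epsilon_i$, to a short explicit list of candidates, each with all of its nonzero entries equal in modulus, i.e. of the form $k^{-1/2}(1,\dots,1,0,\dots,0)$; for such a vector $\int_\Omega|S|\,d\omega=k^{-1/2}\,\mathbb{E}\big|\sum_{i=1}^k\epsilon_i\big|$, and computing these binomial averages shows the value is smallest at $k=2$, equal to $\tfrac12\big(|\tfrac1{\sqrt2}+\tfrac1{\sqrt2}|+|\tfrac1{\sqrt2}-\tfrac1{\sqrt2}|\big)=\tfrac1{\sqrt2}$. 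This yields the asserted constant, which is sharp and attained at the two--dimensional equal--weight vector.

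The substance of the statement lies entirely in this extremal step, and that is where I expect the difficulty: for a fixed value of $T$ the one--variable function above need not be minimized at the symmetric angle $\phi=\pi/4$ (it is when $T=0$, but not for certain nonzero $T$), so the argument must control the averaging over $T$ rather than conclude pointwise. This is precisely Szarek's theorem on the best constant in Khintchine's inequality; an alternative derivation, due to Haagerup, starts from the representation $\int_\Omega|S|\,d\omega=\tfrac1\pi\int_{\mathbb{R}}t^{-2}\big(1-\prod_{i=1}^n\cos(a_it)\big)\,dt$ and estimates the integrand directly. For the purposes of this paper one may simply invoke this known result.
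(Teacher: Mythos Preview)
The paper does not prove this statement at all: it simply states the inequality and cites \cite[Section 8.5]{DeFl} as a reference. Your proposal therefore goes well beyond what the paper does.

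Your sketch is correct in its essentials. The observation that $\int_\Omega|S|\,d\omega\ge\max_i|a_i|$ via the identity $|x+y|+|x-y|=2\max(|x|,|y|)$ is clean and correct, and your identification of the extremal configuration $a=\tfrac{1}{\sqrt2}(1,1,0,\dots,0)$ is right. You are also honest about where the real work lies: the two--coordinate perturbation does not conclude pointwise in $T$, and the reduction to equal--weight vectors genuinely requires Szarek's argument (or Haagerup's integral representation, which you state correctly). Since you ultimately invoke that result rather than reproduce it, your proposal and the paper end up in the same place---treating the sharp constant as known---but you supply useful context the paper omits.
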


The following lemma is the key point in our proof of Theorem \ref{principal II}.
\begin{lemma}\label{theorem 1-summing}
Given an $n\times n$ matrix with real entries $M=(M_{i,j})_{i,j=1}^n$, we have $$\|M\|_{\ell_\infty^n\otimes_{\pi}\ell_\infty^n}\leq \sqrt{2} \|M\|_{\ell_\infty^n(\ell_2^n)}.$$
\end{lemma}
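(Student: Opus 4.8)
The plan is to pass to the dual formulation. By the duality \eqref{duality epsilon-pi}, proving $\|M\|_{\ell_\infty^n\otimes_\pi\ell_\infty^n}\leq \sqrt2\,\|M\|_{\ell_\infty^n(\ell_2^n)}$ is equivalent to proving that for every sign configuration $s=(s_i)$, $t=(t_j)$ with $s_i,t_j=\pm1$ one has
\begin{align*}
\Big|\sum_{i,j=1}^n M_{i,j}s_it_j\Big|\leq \sqrt2\,\|M\|_{\ell_\infty^n(\ell_2^n)}\cdot\|\eta_{s,t}\|_{\text{something}},
\end{align*}
but it is cleaner to argue directly: I claim that for any $n\times n$ matrix $A$ with $\|A\|_{\ell_1^n\otimes_\epsilon\ell_1^n}\leq 1$ one has $\|A\|_{\ell_1^n(\ell_2^n)}\leq\sqrt2$, and then dualize. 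Indeed, once we know the implication ``$\|A\|_{\ell_1^n\otimes_\epsilon\ell_1^n}\le 1 \Rightarrow \|A\|_{\ell_1^n(\ell_2^n)}\le\sqrt2$'', i.e. $\|A\|_{\ell_1^n(\ell_2^n)}\leq\sqrt2\,\|A\|_{\ell_1^n\otimes_\epsilon\ell_1^n}$ for all $A$, we get by taking suprema over the respective unit balls and using that $\ell_\infty^n(\ell_2^n)$ is dual to $\ell_1^n(\ell_2^n)$ and $\ell_\infty^n\otimes_\pi\ell_\infty^n$ is dual to $\ell_1^n\otimes_\epsilon\ell_1^n$ (the duality \eqref{duality epsilon-pi}) that
\begin{align*}
\|M\|_{\ell_\infty^n\otimes_\pi\ell_\infty^n}=\sup_{\|A\|_{\ell_1^n\otimes_\epsilon\ell_1^n}\le1}\Big|\sum_{i,j}M_{i,j}A_{i,j}\Big|\leq\sup_{\|A\|_{\ell_1^n(\ell_2^n)}\le\sqrt2}\Big|\sum_{i,j}M_{i,j}A_{i,j}\Big|=\sqrt2\,\|M\|_{\ell_\infty^n(\ell_2^n)}.
\end{align*}

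So the real content is the norm inequality $\|A\|_{\ell_1^n(\ell_2^n)}\leq\sqrt2\,\|A\|_{\ell_1^n\otimes_\epsilon\ell_1^n}$ for every real $n\times n$ matrix $A$. Here I would use Khintchine. Fix $A$ and let $A_i=(A_{i,j})_{j=1}^n$ denote its $i$-th row. For each $i$ Khintchine's inequality gives
\begin{align*}
\Big(\sum_{j=1}^n|A_{i,j}|^2\Big)^{1/2}\leq\sqrt2\int_\Omega\Big|\sum_{j=1}^nA_{i,j}\epsilon_j(\omega)\Big|\,d\omega,
\end{align*}
where $(\epsilon_j)$ are i.i.d.\ Bernoulli signs. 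Summing over $i$ and interchanging sum and integral,
\begin{align*}
\|A\|_{\ell_1^n(\ell_2^n)}=\sum_{i=1}^n\Big(\sum_{j=1}^n|A_{i,j}|^2\Big)^{1/2}\leq\sqrt2\int_\Omega\sum_{i=1}^n\Big|\sum_{j=1}^nA_{i,j}\epsilon_j(\omega)\Big|\,d\omega.
\end{align*}
Now for each fixed $\omega$, choose $s_i(\omega)=\pm1$ to be the sign of $\sum_j A_{i,j}\epsilon_j(\omega)$, so that $\sum_i|\sum_jA_{i,j}\epsilon_j(\omega)|=\sum_{i,j}A_{i,j}s_i(\omega)\epsilon_j(\omega)\leq\|A\|_{\ell_1^n\otimes_\epsilon\ell_1^n}$ by definition of the $\ell_1^n\otimes_\epsilon\ell_1^n$ norm (take $t_j=\epsilon_j(\omega)$). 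Since this bound is uniform in $\omega$ and the probability measure has total mass $1$, the integral is at most $\|A\|_{\ell_1^n\otimes_\epsilon\ell_1^n}$, giving $\|A\|_{\ell_1^n(\ell_2^n)}\leq\sqrt2\,\|A\|_{\ell_1^n\otimes_\epsilon\ell_1^n}$, as desired.

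I do not expect a serious obstacle here; this is essentially the classical fact that the identity on $\ell_\infty^n$ factors through $\ell_2^n$ with $1$-summing norm $\sqrt2$ (Grothendieck/Khintchine), dressed in the tensor-norm language of the paper. The only points requiring a little care are (i) making sure the two duality statements are invoked correctly — the $\epsilon$–$\pi$ duality is \eqref{duality epsilon-pi}, and the $\ell_1^n(\ell_2^n)$–$\ell_\infty^n(\ell_2^n)$ duality was asserted just above via Cauchy–Schwarz — and (ii) the measurable selection of the signs $s_i(\omega)$, which is harmless since $\sum_jA_{i,j}\epsilon_j(\omega)$ is a simple function of $\omega$ (the Bernoulli space is finite, $2^n$ points). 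Alternatively, one can avoid duality entirely and argue directly on a deterministic decomposition of $M$, but routing through Khintchine plus the stated dualities is the shortest path.
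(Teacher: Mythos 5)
Your proof is correct and follows essentially the same route as the paper: dualize to the statement $\|A\|_{\ell_1^n(\ell_2^n)}\leq\sqrt2\,\|A\|_{\ell_1^n\otimes_\epsilon\ell_1^n}$, apply Khintchine rowwise, sum, swap sum and integral, and bound the integrand by $\|A\|_{\ell_1^n\otimes_\epsilon\ell_1^n}$ via the sign reformulation of that norm. The only cosmetic difference is that the paper states this last bound as an identity $\|M\|_{\ell_1^n\otimes_\epsilon\ell_1^n}=\sup_{\epsilon_j=\pm1}\sum_i\big|\sum_jM_{i,j}\epsilon_j\big|$ rather than spelling out the sign selection $s_i(\omega)$ as you do.
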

Lemma \ref{theorem 1-summing} is a reformulation of the well known fact that $\pi_1(id:\ell_1^n\rightarrow \ell_2^n)\leq \sqrt{2}$, where here $\pi_1$ denotes the $1$-summing norm. We present here a self-contained proof.
\begin{proof}
By the duality relation (\ref{duality epsilon-pi}) and the comments above, the statement is equivalent to proving that for every real matrix $M=(M_{i,j})_{i,j=1}^n$ we have  $$\|M\|_{ \ell_1^n(\ell_2^n)}\leq\sqrt{2} \|M\|_{\ell_1^n\otimes_\epsilon \ell_1^n}.$$ In order to prove this, we apply Khintchine inequality:
\begin{align*}
 \|M\|_{ \ell_1^n(\ell_2^n)}&=\sum_{i=1}^n\Big(\sum_{j=1}^n|M_{i,j}|^2\Big)^\frac{1}{2}\leq \sqrt{2}\sum_{i=1}^n\int_\Omega\Big|\sum_{j=1}^nM_{i,j}\epsilon_j(\omega)\Big|d\omega\\&=\sqrt{2}\int_\Omega\sum_{i=1}^n\Big|\sum_{j=1}^nM_{i,j}\epsilon_j(\omega)\Big|d\omega\leq \sqrt{2}\int_\Omega \|M\|_{\ell_1^n\otimes_\epsilon \ell_1^n} d\omega= \sqrt{2} \|M\|_{\ell_1^n\otimes_\epsilon \ell_1^n},\end{align*} where the second inequality follows from the easy fact that
$$\|M\|_{\ell_1^n\otimes_\epsilon \ell_1^n}=\sup\Big\{\sum_{i=1}^n\Big|\sum_{j=1}^nM_{i,j}\epsilon_j\Big|:\epsilon_j=\pm 1\Big\}.$$

\end{proof}

\begin{proof}[Proof of Theorem \ref{principal II}]

Let $\gamma=(\langle \bar{g}_i ,\bar{h}_j\rangle)_{i,j=1}^n$ and let us fix $i=1,\dots,n$. Then, 
\begin{align*}
\Big(\sum_{j=1}^n\langle \bar{g}_i, \bar{h}_j\rangle^2\Big)^{\frac{1}{2}}\geq \frac{1}{\min\limits_{j=1,\dots, n} \|h_j\|}\Big(\sum_{j=1}^n\langle \bar{g}_i, h_j\rangle^2\Big)^{\frac{1}{2}}=\frac{1}{\min\limits_{j=1, \dots, n} \|h_j\|}\Big(\sum_{j=1}^n y_{i,j}^2\Big)^{\frac{1}{2}}.
\end{align*}
Now, $\bar{g}_i$ is a unit vector which is independent of the gaussian vectors $h_j$ and then $\left(y_{i,j}\right)_{j=1}^n$ is a gaussian vector. Using Proposition \ref{concentracion} together with a union bound argument, one can easily deduce that  for every $0<\epsilon<1$
\begin{equation*}
\Pr \Big(\frac{1}{\min\limits_{j=1,\dots, n} \|h_j\|}\Big(\sum_{j=1}^n y_{i,j}^2\Big)^{\frac{1}{2}}\geq \frac{1}{1-\epsilon}\sqrt{\frac{n}{m}}\Big)\leq ne^{-\frac{m\epsilon^2}{4}}+e^{-\frac{n\epsilon^2}{4}}.
\end{equation*}

Therefore, if $\frac{m}{n}=\alpha> 2$ we can find an $\epsilon_0=\epsilon(\alpha)\in (0,1)$ so that
\begin{equation*}
\Pr \Big(\Big(\sum_{j=1}^n\langle \bar{g}_i, \bar{h}_j\rangle^2\Big)^{\frac{1}{2}}\geq (1-\epsilon_0)\frac{1}{\sqrt{2}}\Big)\leq 2ne^{-nC(\alpha)},
\end{equation*}where $C(\alpha)$ is a positive constant depending on $\alpha$. Moreover, by a union bound argument we obtain
\begin{align*}
\Pr \Big(\|\gamma\|_{\ell_\infty^n(\ell_2^n)}\geq (1-\epsilon_0)\frac{1}{\sqrt{2}}\Big)&=\Pr \Big(\sup_{i=1,\dots, n}\Big(\sum_{j=1}^n\langle \bar{g}_i, \bar{h}_j\rangle^2\Big)^{\frac{1}{2}}\geq (1-\epsilon_0)\frac{1}{\sqrt{2}}\Big)\\&\leq 2n^2 e^{-nC(\alpha)}.
\end{align*}

According to Lemma \ref{theorem 1-summing}, this shows that 
\begin{equation*}
\Pr \Big(\|\gamma\|_{\ell_\infty^n\otimes_\pi\ell_\infty^n}\geq (1-\epsilon_0)\Big)\leq 2n^2e^{-nC(\alpha)}.
\end{equation*}

By (\ref{local-projective}) the previous equation means that the matrix correlation $\gamma$ is local with probability larger than or equal to $1-2n^2e^{-nC(\alpha)}$. This finishes the proof.
\end{proof}

\section*{Acknowledgments}
We would like to thank Jop Briet and  Yeong-Cherng Liang for helpful discussions on previous versions. The author's research was supported by MINECO (grants MTM2011-26912, MTM2012-30748, MTM2014-54240-P), Comunidad de Madrid (grant QUITEMAD+-CM, ref. S2013/ICE-2801), Spanish ``Ram\'on y Cajal'' program and ICMAT Severo Ochoa project SEV-2015-0554 (MINECO), and Mexico's National Council for Science and Technology postdoctoral grant 180486.
%
%


\end{document}